\newtheorem{Definition}{Definition}
\newtheorem{Problem}{Problem}
\newtheorem{Lemma}{Lemma}
\newtheorem{Algorithm}{Algorithm}
\newtheorem{Remark}{Remark}
\newtheorem{Baseline}{Baseline}
\newtheorem{Example}{Example}
\newtheorem{Corollary}{Corollary}
\title{Joint Downlink Scheduling for File Placement and Delivery in Cache-Assisted Wireless Networks with Finite File Lifetime}
\author{Bojie Lv, Lexiang Huang, \IEEEmembership{Student Member,~IEEE}, and Rui Wang, \IEEEmembership{Member,~IEEE} 
	\thanks{
		Manuscript received June 24, 2018; revised December 3, 2018 and February 16, 2019; accepted February 18, 2019. This work was supported in part by National Natural Science Foundation of China under grant 61771232, Natural Science Foundation of Guangdong Province of China under grant 2017A030313335 and the Shenzhen Science and Technology Innovation Committee under Grant JCYJ20160331115457945. The associate editor coordinating the review of this paper and approving it for publication was Lawrence Ong. (Corresponding author: Rui Wang.)
		
		Bojie Lv and and Rui Wang are with Department of Electrical and Electronic Engineering, The Southern University of Scienece and Technology, China, and also with the PCL Research Center of Networks and Communications, Peng Cheng Laboratory, China, Email: \{lvbj@mail.sustc.edu.cn, wang.r@sustc.edu.cn\}.

		Lexiang Huang is with Department of Electrical and Electronic Engineering, The Southern University of Scienece and Technology, China, Email: \{huanglx@mail.sustc.edu.cn\}.

Part of this work has been accepted in IEEE ICC 2018 \cite{WANG2017}. We have extended the conference paper by including the learning-based algorithm in Section IV-C, proactive scheduling algorithm in Section V, and more illustrative simulation results. } } 
\begin{document}

\maketitle
\begin{abstract}
In this paper, downlink transmission scheduling of popular files is optimized with the assistance of wireless cache nodes. Specifically, the requests of each file, which is further divided into a number of segments, are modeled as a Poisson point process within its finite lifetime. Two downlink transmission modes are considered: (1) the base station reactively multicasts the file segments to the requesting users and selected cache nodes; (2) the base station proactively multicasts some file segments to the selected cache nodes without requests. The cache nodes with decoded file segments can help to offload the traffic  via other spectrum. Without the proactive multicast, we formulate the downlink transmission resource minimization as a dynamic programming problem with random stage number, which can be approximated via a finite-horizon Markov decision process (MDP) with fixed stage number. To address the prohibitively huge state space, we propose a low-complexity scheduling policy by linearly approximating the value functions of the MDP, where the bound on the approximation error is derived. Moreover, we propose a learning-based algorithm to evaluate the approximated value functions for unknown geographical distribution of requesting users. Finally, given the above reactive multicast policy, a proactive multicast policy is introduced to exploit the temporal diversity of shadowing effect. It is shown by simulation that the proposed low-complexity reactive multicast policy can significantly reduce the resource consumption at the base station, and the proactive multicast policy can further improve the performance.
\end{abstract}

\section{introduction}\label{sec:intro}
Caching is a promising technology to improve the transmission efficiency of wireless networks by exploiting the multiple transmissions of popular files \cite{SalmanAvestimehr2014,Vincent2013}. In this paper, we consider a flexible deployment scenario where there is no wired connection or dedicated spectrum between the base station (BS) and cache nodes. Thus the cache nodes receive popular files via downlink multicast, sharing the same transmission resources as ordinary users. Moreover, the timeliness of popular files, as mentioned in \cite{Leconte2016}, is also considered in transmission design.

%Hence the joint scheduling of files' downlink delivery and caching, with the objective of minimizing the downlink transmission resource consumption, shall be addressed via the approach of \em{dynamic programming}.

\subsection{Related Works}

There have been a number of works on the optimization of file placement with limited cache size.  For example, it is shown in\cite{Caire2012,Molisch2013}, that cache nodes should cache the most popular files if each user has only access to exactly one cache node. The papers \cite{W.Choi2016,K.B.Huang2017} showed that caching files randomly with optimized probabilities is better than storing the most popular files when each user can be served by multiple cache nodes. In \cite{JunZhang2016}, the authors proposed a mobility-aware file placement policy to improve data offloading rate. Moreover, there are also some works on the coded caching scheme design to exploit the multicast transmissions \cite{coded_cache_1,M.Tao2017}.  With cached files, the authors in \cite{Y.Cui2016} formulated the joint minimization of the average delay and power consumption at the BS as a stochastic optimization problem, and the fetching costs are  added  into the  cost function in \cite{Y.Cui2017}. The authors in \cite{Leconte2016} designed a dynamic file placement algorithm  via timely estimation of file content popularity. In most of the above works, the cost of file placement at the cache nodes is not taken into consideration, as it is assumed to be completed before the phase of file delivery to the users. For some types of popular files, however, there may not be sufficient time for file placement before users' accesses. For instance, a great number of news clips are posted to the websites in the daytime, and there is no off-peak hours for caching at the cache nodes (file placement). Hence, it is also interesting to consider the joint scheduling of file placement and delivery.

There are also some works on the joint scheduling design of caching and downlink file transmission. For example, a file placement and delivery framework for heterogeneous OFDM networks was investigated in \cite{Ansari2016}, where the small BSs can cache the popular files and the overall throughput was maximized in each frame via a joint scheduling algorithm. In \cite{cui2016gc}, an optimal caching and user association policy was proposed to minimize the latency in a cached-enabled heterogeneous network with wireless backhaul. In the above works, the files are delivered to small BSs via dedicated backhaul links, i.e., there is no resource competition with the downlink transmission. Moreover, with the update of cache status, the relation between the scheduling in different slots should also be exploited, which is not considered in the above works. If there is no dedicated link or period for file placement  at the cache nodes, the file placement and delivery can be simultaneously scheduled in a multicast manner \cite{Cui2017}. This raises an trade-off between the transmission resource consumption and the file placement. For example, if more resource is spent in downlink multicast, files will be cached in more devices, which may save the downlink resource in future transmissions. As a result, a joint optimization of file placement and delivery with shared transmission resource and the consideration of total transmission resource consumption becomes necessary, and the method of dynamic programming can be utilized.

In fact, dynamic programming via Markov decision process (MDP) has been considered in  delay-aware resource allocation of wireless systems\cite{Cui2017,Moghadari2013,CuiTIT
	,cui2010,Wang2013}. For example, the infinite-horizon MDP was used to optimize the cellular uplink \cite{Moghadari2013,cui2010} and downlink transmissions \cite{Cui2017}, and relay networks\cite{Wang2013}, where the average transmission delay is either minimized or constrained. Moreover, low-complexity algorithm design is usually considered in the above works to avoid the curse of dimensionality \cite{Shewhart2011Approximate}. However, the popular files to be buffered at the cache nodes  usually have a finite lifetime, and the infinite-horizon MDP may not be suitable in modeling anymore. Nevertheless, the MDP with finite stages is usually more complicated \cite{FHMDP}. This is because the optimal policy depends not only on the system state but also on the stage index. To our best knowledge, it is still an open issue on the low-complexity algorithm design and analysis with finite-horizon MDP.

\subsection{Our Contributions}

In this paper, we consider the scheduling of downlink file transmission with the assistance of wireless cache nodes. Specifically, the requests of each file is modeled as a Poisson point process (PPP) within its lifetime, and two downlink transmission modes are considered: (1) the BS reactively multicasts file segments to the requesting users and selected cache nodes; (2) the BS proactively multicasts some file segments to the selected cache nodes without requests from users. With the decoded file segments, cache nodes can offload the traffic from the BS and serve the users within their coverage area  via different spectrum from the downlink (e.g., Wi-Fi) as \cite{May2016,Lv2018,Molisch2016}. The main contributions of this work are summarized below:
\begin{itemize}
	\item With reactive multicast only, we formulate the minimization of a weighted sum of multicast transmission energy and symbol number for one file within its lifetime as a dynamic programming problem with random stage number. The problem does not follow the standard forms of MDP, and it is difficult to find the optimal solution. We first propose to approximate and bound it via a finite-horizon MDP with fixed stage number. Then, we propose a novel linear approximation method on the value functions of the MDP so that the exponential complexity (i.e., curse of dimensionality) can be reduced to linear. With the knowledge of spatial distribution of requesting users, the approximated value functions can be calculated via analytical expressions; whereas, a learning algorithm is also introduced to evaluate the approximated value functions if the distribution of requesting users is unknown.
	
	\item The approximation error of the finite-horizon MDP is usually difficult to analyze, we shall shed some light on this open issue in our problem. Specifically, we first derive an tight upper bound on the gap between the true value functions and the approximated ones. Then we further derive an analytical lower bound on the optimal (minimum) average transmission cost at the BS. 
	
	\item Given the above scheduling policy of reactive multicast, a per-stage optimization approach for proactive multicast is proposed to further suppress the average transmission cost at the BS.
\end{itemize}
It is shown by simulation that, compared with the baseline schemes, the proposed low-complexity algorithm based on approximated value functions can significantly reduce the resource consumption at the BS, and the proactive multicasting policy can further improve the performance.

%The remainder of this paper is organized as follows. In Section \ref{sec:model}, the system model is %introduced. In Section \ref{sec:opt}, we formulate the downlink resource allocation with only reactive %muticast as a dynamic programming problem with random stage number, and approximate it via a finite-horizon %MDP. In Section \ref{sec:approximation}, a linear approximation is proposed to the value functions, a %learning-based algorithm is developed for evaluating the approximated value functions when the distribution %of requesting users is unknown, and the bounds on the approximation error and optimal performance are %derived. In Section \ref{sec:proactive}, a per-stage optimization approach is proposed for proactive %multicast. The numerical simulation is provided in Section \ref{sec:sim}, and the conclusion is drawn in %Section \ref{sec:con}.

\section{System Model}\label{sec:model}

\begin{figure}[tb]
	\centering
	\includegraphics[height=150pt,width=250pt]{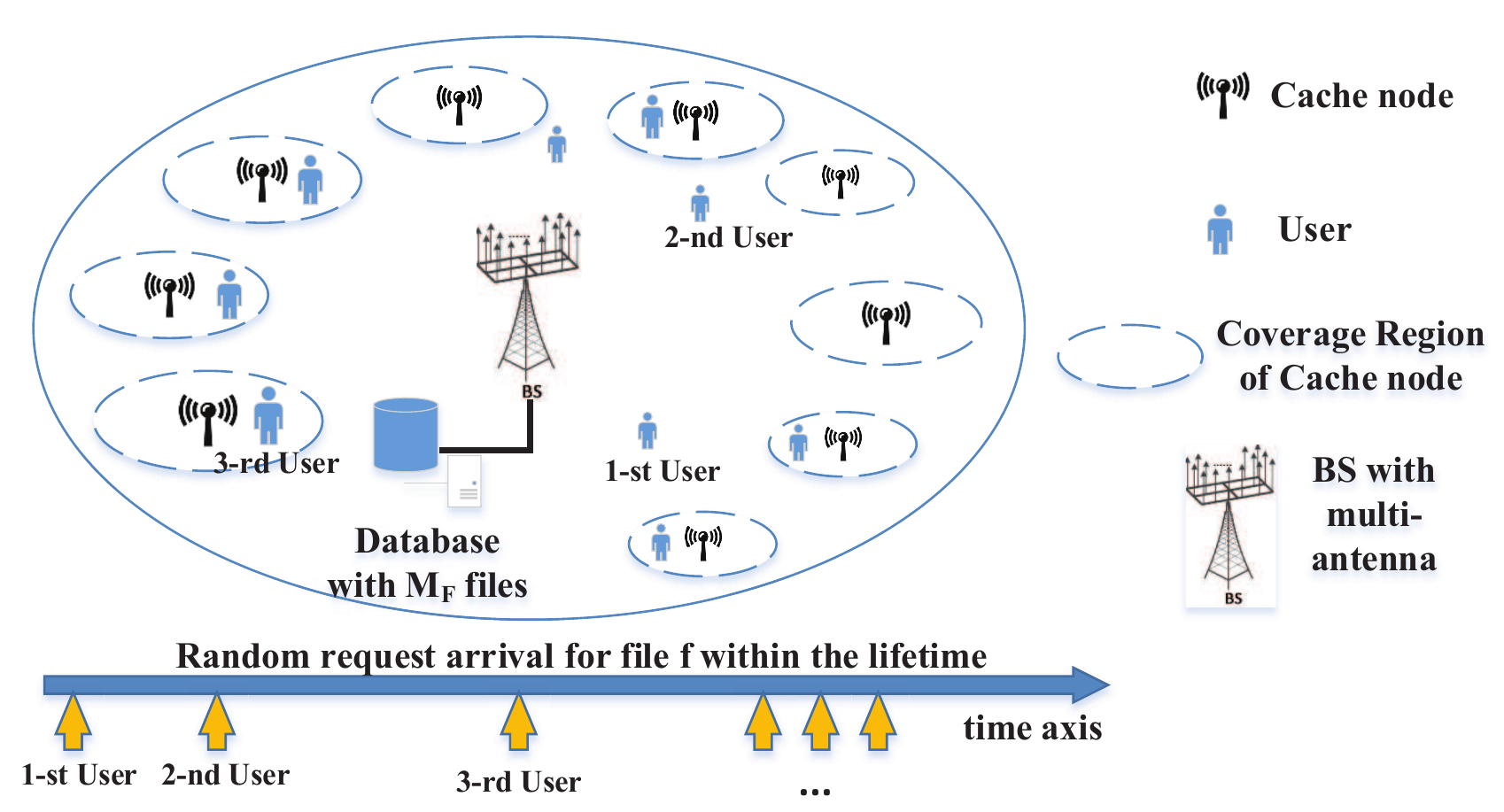}
	\caption{Illustration of network model with one BS and multiple wireless cache nodes.}
	\label{fig:scheme}
\end{figure}

In this section, we introduce the network model for the downlink file transmission with the assistance of wireless cache nodes, and the physical-layer model for the file placement (i.e., transmit files to the cache nodes) and delivery (i.e., transmit files to the requesting users). 

\subsection{Network Model}

As illustrated in Fig. \ref{fig:scheme}, we consider the downlink file transmission in a cell with one BS and $N_C $ single-antenna cache nodes. There are $ N_T $ antennas at the BS. Let $\mathcal{C}\subset \mathbb{R}^2$ be the service area of the cell, $ \mathcal{C}_c $ ($ \forall c=1,2,...,N_C $) be the service region of the $ c $-th cache node and $ \mathcal{C}_0 \triangleq \mathcal{C} -\mathcal{C}^*  $ be the region not served by any cache node, where $ \mathcal{C}^* \triangleq \mathcal{C}_1 \cup \mathcal{C}_2 \cup ... \cup \mathcal{C}_{N_C} $. In this paper, we consider cache node deployment without overlapping, thus assuming $ \mathcal{C}_i \cap  \mathcal{C}_j = \emptyset$ for $\forall i \neq j $. A database with popular files is accessible to the BS. In order to capture the temporal dynamics of files' popularity, it is assumed that the $ f $-th file ($ f=1,2,3,... $) is accessible at the database since the time instance $ t_f $ and remains popular within a finite lifetime $ T_f $. We consider the delivery of the files for the requests raised within their lifetimes (e.g. the lifetime for the $ f $-th file is $ [t_f,t_f+T_f] $). The files are not considered for caching after their lifetime, as their popularity will drop down. It is assumed that the $ f $-th ($ f=1,2,3,... $) file consists of $ N_f $ segments. Each of them, with $ R_f^I $ information bits equally, is encoded separately. Within the lifetime of each file, the locations of the requesting users are independent and identically distributed (i.i.d.) in the cell according to certain spatial distribution with probability density function $ \mathcal{F} : \mathcal{A}\rightarrow [0,1]$, $ \forall \mathcal{A} \subset \mathcal{C} $. It is assumed that the requesting users' locations are quasi-static during the file transmission.

The requests on the $ f $-th file ($ \forall f $) within its lifetime are modeled as a one-dimensional Poisson point process (PPP) with intensity $ \lambda_f $. Hence the probability mass function (PMF) of the remaining request number from the time instance $ t \in [t_f, t_f+T_f]$ is given by
\begin{equation} \label{eqn:request}
\Pr(\mbox{Request Number} = n) = \frac{(\lambda_f T_{rem})^n}{n!} e^{-\lambda_f T_{rem}},
\end{equation}
where $ T_{rem} = t_f+T_f - t $. 

\begin{Remark}[PPP File Request Model]
	The PPP was widely used to model the random phone call arrivals at an exchange. Alternatively, in most of the existing literature \cite{Web_cache,Molisch2013,cui2016gc}, the popularity of files is characterized by the probability of access. The equivalence between the two models are elaborated below. Suppose that there is one file request in each frame with probability $ \beta  \in [0,1]$. Given one file request arrival, the $ f $-th file is requested with probability $ p_f $. $ \{p_f|\forall f\} $ can follow the Zipf distribution with $ \sum_f p_f = 1 $. Then the probability mass function (PMF)  of the request number of $f$-th file within $ N $ frames is given by
	\begin{eqnarray} \nonumber
	&&\Pr(\mbox{Request Number in $ N $ frames} = n) \nonumber\\
	&=& \left( \!\!\!\begin{array}{c}
	N \\ 
	n
	\end{array}\!\!\! \right) (\beta p_f)^n (1-\beta p_f)^{N-n} \rightarrow \frac{e^{-N\beta p_f}(N\beta p_f)^n}{n!}, \nonumber \\ &\mbox{when}& \quad N \rightarrow +\infty.  \nonumber
	\end{eqnarray}
	Note that $ N \beta p_f $ is analogy to $ \lambda_f T_{rem} $ in (\ref{eqn:request}), the Poisson arrival model in (\ref{eqn:request}) is actually consistent with the file request model in the existing literature. Moreover, the condition of sufficient large $ N $ is satisfied as the lifetime is significantly larger than the frame duration.
\end{Remark}

The files may not be cached at the cache nodes before their lifetime. The BS can either proactively multicast some file segments to some cache nodes without any requests, or reactively multicast the segments of one file to the requesting user and some cache nodes if one request is received. In the remaining of this paper, we shall refer to the proactive transmission from the BS to the cache nodes without requests as {\em proactive multicast}, and the reactive transmission from the BS to the requesting user and the cache nodes as {\em reactive multicast}. The proactive multicast is for the file placement, and the reactive multicast should jointly consider both file placement and delivery. In the file delivery, the segments of the requested file will be delivered from the nearby cache node to the requesting user if they have been cached before, and the remaining segments will be multicasted from the BS. We shall refer to the transmission from the service cache node to the requesting user as {\em device-to-device (D2D) file delivery}. It is assumed that the D2D links can use Wi-Fi, bluetooth, or other air interfaces, which are not in the same spectrum as the downlink transmission \cite{May2016,Molisch2016}. In this paper, we shall minimize the total transmission resource consumption at the BS, including the transmission energy and the number of transmission symbols, by offloading traffics to the cache nodes.

\begin{Remark}[Multi-Transmission Scheduling]
	We consider the cached-enabled downlink file transmission where both file placement and delivery should be joint scheduled. For example, if more transmission symbols and power are scheduled in the reactive multicast of one certain file, more cached nodes are able to decode it, which may suppress the downlink  resource consumption in the following requests of this file. Note that the requests arrive at random locations and time instantaneous, it is a stochastic optimization problem, and it is difficult to determine the parameters for all transmissions (including the number of transmission symbols and power) at the very beginning of each file's lifetime. This issue will be addressed via the method of MDP in this paper.
\end{Remark}

\subsection{Downlink Physical Layer Model}

In either proactive or reactive multicast, the space-time block code (STBC) with full diversity is used at the BS for two reasons: (1) there is no requirement on the channel state information at the transmitter (CSIT); (2) diversity gain can be achieved at all the receivers. In the reactive multicast, we refer to the user, which raises the $ n $-th request on the $ f $-th file, as the $ (f,n) $-th user, and refer to the $ s $-th segment of the $ f $-th file as the $ (f,s) $-th segment. Since the transmission time of one file segment is much larger than the channel coherent time of small-scale fading, it is assumed that the ergodic channel capacity span all possible small-scale channel fading and inter-cell interference can be achieved during one segment transmission. Let $ \rho_{f,n} $ and $ \rho_c $ be the pathloss from the BS to the $ (f,n) $-th user and the $ c $-th cache node respectively, $ \eta_{f,n,s} $ and $ \eta_{f,n,s}^c $ be the corresponding shadowing attenuation in $ n $-th transmission of the $ (f,s) $-th segment, $ P_{f,n,s} $ and $ N_{f,n,s} $  be the downlink transmission power and the number of transmission symbols of the $ s $-th file segment in response to the request of the $ (f,n) $-th user. Following the capacity of full-diversity STBC in \cite{paulraj2003introduction},  the throughput achieved by the $ (f,n) $-th downlink user in the reactive multicast of the $ s $-th segment is given by
\begin{equation}\label{eqn:dl-rate}
	R_{f,n,s} = N_{f,n,s} \mathbb{E}_{\mathbf{h}_{f,n,s},I_{f,n}} \bigg[ \alpha \log_2 \bigg( 1 + \frac{||\mathbf{h}_{f,n,s}||^2 P_{f,n,s}}{N_T (\sigma_z^2+I_{f,n})} \bigg) \bigg],
\end{equation}
where $ \alpha $ is the transmission rate of the adopted full-diversity STBC \footnote{For example, $ \alpha = 1 $ and $ N_T = 2$ for Alamouti code. Moreover, $ \alpha $ is usually less than $ 1 $ for $ N_T>2 $.}, $ \sigma_z^2 $ is the power of noise, $I_{f,n}$ is the interference power from the neighbouring BSs\footnote{The exact value of $I_{f,n}$ depends on the scheduling strategies of the neighbouring co-channel BSs, which leads to complicated multi-cell joint scheduling. In order to decouple the scheduling among multiple cells, $I_{f,n}$ can be estimated by assuming all the interfering BSs are transmitting with peak power. Note that the BSs usually use the peak power to broadcast the control information at the head of each frame. One simple approach to measure $I_{f,n}$ is to schedule a few quiet symbols in the frame head, where the service BS does not transmit any signal and the inter-cell interference at the frame head can be measured at the receivers.}, $ \mathbf{h}_{f,n,s} $ is the i.i.d. channel vector from the BS to the requesting user. Each element of $ \mathbf{h}_{f,n,s} $ is complex Gaussian distributed with zero mean and variance $ \rho_{f,n}\eta_{f,n,s} $. As a remark note that the transmission of one segment may consume a large number of frames, and the channel vector $ \mathbf{h}_{f,n,s} $ can be different from frame to frame. Since we consider the ergodic channel capacity, the randomness in small-scale fading is averaged (similar to  \cite{Heath2011,Yang2016Analysis,peter2007})
. As a result, the $ (f,n) $-th user can decode the $ s $-th segment only when $ R_{f,n,s}  \geq R_f^I$. Simultaneously in the reactive multicast, the throughput from the BS to the $ c $-th cache node is given by
\begin{equation}\label{eqn:dl-cache}
	R_{f,n,s}^c = N_{f,n,s} \mathbb{E}_{\mathbf{h}_{f,n,s}^c,I_c} \bigg[ \alpha \log_2 \bigg( 1 + \frac{||\mathbf{h}_{f,n,s}^c||^2 P_{f,n,s}}{N_T (\sigma_z^2+I_c)} \bigg) \bigg],
\end{equation}
where $ \mathbf{h}_{f,n,s}^c $ is the i.i.d. channel vector from the BS to $ c $-th cache node, 	$ I_{c} $ is the interference power from the neighbouring BSs\footnote{$I_c$ can be estimated in a similar way to $I_{f,n}$.}. Each element of $ \mathbf{h}_{f,n,s}^c $ is complex Gaussian distributed with zero mean and variance $ \rho_{c}\eta_{f,n,s}^c $. The $ c $-th cache node can decode the $ (f,s) $-th segment only when $R_{f,n,s}^c  \geq R_f^I$. The throughputs of (\ref{eqn:dl-rate}) and (\ref{eqn:dl-cache}) depend on the pathloss and shadowing of the corresponding links. Hence in the reactive multicast of the $ s $-th segment, the requesting users and cache nodes can decode the segment after receiving different numbers of multicast symbols. By adjusting $ P_{f,n,s} $ and $ N_{f,n,s} $ in physical layer, the BS can control the set of receiving cache nodes. In the next section, we shall formulate the optimization of $ P_{f,n,s} $ and $ N_{f,n,s} $ ($ \forall f,s $) as an MDP with reactive multicast policy.

If periodic proactive multicast is allowed, let $ \eta^c_{k} $ be the shadowing attenuation from the BS to the $ c $-th cache node in the $ k $-th proactive multicast, $ P_k $ and $ N_k $ be the corresponding downlink transmission power and the number of transmission symbols. The throughput achieved by the $ c $-th cache node is given by
\begin{equation}\label{eqn:proactive}
R_{k}^c = N_{k} \mathbb{E}_{\mathbf{h}_{k}^c,I_c} \left[ \alpha \log_2 \left( 1 + \frac{||\mathbf{h}_{k}^c||^2 P_{k}}{N_T (\sigma^2_z+I_c)} \right) \right],
\end{equation}
where $ \mathbf{h}_{k}^c $ is the i.i.d. channel vector from the BS to $ c $-th cache node. Each element of $ \mathbf{h}_{k}^c $ is complex Gaussian distributed with zero mean and variance $ \rho_{c}\eta_{k}^c $. The selected file segment can be decoded at the $ c $-th cache node when $R_{k}^c  \geq R_f^I$. By adjusting $ P_{k} $ and $ N_{k} $ in physical layer, the BS can control the set of receiving cache nodes in proactive multicast. In Section \ref{sec:proactive}, the allocation of $ P_{k} $ and $ N_k $ will be considered in proactive multicast policy.

In both reactive and proactive multicasts, it is assumed that the downlink shadowing effect is quasi-static during the transmission period of one file segment, and different segment transmissions may experience different shadowing attenuations. This model could characterize the large file transmission. For example, the playback time of videos may be several minutes, which is larger than the coherent time of shadowing effect. 

In this paper, we shall address the joint scheduling of proactive and reactive multicasts by two steps. In the following Section \ref{sec:opt} and \ref{sec:approximation}, we first consider low-complexity sub-optimal scheduling designs for reactive multicast. Based on the established scheduling framework, the joint consideration of both proactive and reactive multicasts is addressed in Section \ref{sec:proactive}.

\begin{Remark}[Segment-Level Scheduling]\label{rem:two-time-scale}
	There is a two-time-scale scheduling structure in our problem. Take the reactive multicast as an example. $ N_{f,n,s} $ multicast symbols for the $ (f,s) $-th file segment should be scheduled in a large number of frames, say from the $ k $-th frame to the $ (k+m-1) $-th frame. Let $ M_i $ be the number of scheduled symbols in the $ i $-th frame ($ k \leq i \leq k+m-1 $). $ N_{f,n,s} $ and $ \{M_i|  k \leq i \leq k+m-1 \} $ can be referred to as the segment-level and frame-level parameters respectively. Their relation is $ \sum_{i=k}^{k+m-1} M_i = N_{f,n,s}$. 
	
	Due to the fixed frame size, the scheduling of $ \{M_i|  k \leq i \leq k+m-1 \} $ should jointly consider all the active unicasts, multicasts and broadcasts of the BS, i.e., they should be constrained with other transmissions. On the other hand, we assume there is no buffer overflow at the BS and $ N_{f,n,s} $ multicast symbols will always be transmitted, i.e., there is no constraint on $ N_{f,n,s} $ or on the maximum number of frames to finish the segment multicast. In this paper, we focus on the optimization in the segment level, and the scheduling in the frame level is outside the scope of this paper. However, it should be mentioned that given $ N_{f,n,s} $ in segment level, the scheduling in frame level can affect the transmission delay of the file segment. For example, if $ M_i $ is small, larger $ m $ is required to finish the multicast. 
\end{Remark}

\section{Finite-Horizon MDP Formulation for Reactive Multicast} \label{sec:opt}

In this section, the scheduling design for reactive multicast is first formulated as an dynamic programming problem. However, the optimal solution is difficult to obtain due to the random stage number and continuous state space. Hence, a solvable finite-horizon MDP with a fixed number of stages is introduced to approximate the dynamic programming problem.

\subsection{Dynamic Programming Problem Formulation} 

Without proactive multicast, the system state and the scheduling policy are defined as follows.

\begin{Definition}[System State]
	When receiving the $ n $-th request on the $ f $-th file, the system status is uniquely specified by the following set of parameters $ U_{f,n} \triangleq  \left\{\mathcal{B}_{k,s}^c,\rho_{f,n} , \eta_{f,n,s}, \eta_{f,n,s}^c| \forall k, c=1,...,N_C; s=1,...,N_f  \right\} = S_{f,n} \cup \left\{\mathcal{B}_{k,s}^c| \forall k \neq f, c, s \right\},$ where the $ S_{f,n} = \left\{\mathcal{B}_{f,s}^c,\rho_{f,n} , \eta_{f,n,s}, \eta_{f,n,s}^c| \forall c,s  \right\}  $, $ \mathcal{B}_{k,s}^c =1 $ means that the $ (k,s) $-th segment has been successfully decoded and stored at the $ c $-th cache node and $ \mathcal{B}_{k,s}^c =0 $ means otherwise. $ U_{f,n} $ and $ S_{f,n} $ are referred to as the global and per-file system state of the $ (f,n) $-th reactive multicast, respectively.
\end{Definition}

\begin{Definition}[Reactive Multicast Policy]
	Let $ J_{f,n} $ be the set of segments which should be transmitted to the $ (f,n) $-th user via downlink, i.e., 
	\begin{align}
	J_{f,n}=
	\begin{cases}
	\bigcup\limits_{\{s|\mathcal{B}_{f,s}^{c}=0\}}\{s\}, & \mbox{when } \mathbf{l}_{f,n} \in \mathcal{C}_{c} \ \mbox{and} \ c \neq 0
	\cr \{1,2,...,N_f\}, & \mbox{when } \mathbf{l}_{f,n} \in \mathcal{C}_{0},
	\end{cases}
	\end{align}
	where $ \mathbf{l}_{f,n} $ is the location of the $ (f,n) $-th user. Let $ T_{f,n}^k $ be the remaining lifetime of the $ k $-th file when receiving the $ n $-th file request on the $ f $-th file. The scheduling policy $ \Omega_{f,n} $ ($ \forall f,n $) is a mapping from system state $ U_{f,n} $ and the remaining lifetimes of all the files $ \{T_{f,n}^k|\forall k\} $ to the transmission parameters $ (P_{f,n,s}, N_{f,n,s}) $ ($ \forall s \in J_{f,n} $) and the set of receiving cache nodes for multicast $ \mathbf{c}_{f,n,s}$($ \forall s \in J_{f,n} $), i.e. $\Omega_{f,n}( U_{f,n}, \{T_{f,n}^k|\forall k\})= \{(P_{f,n,s}, N_{f,n,s}),\mathbf{c}_{f,n,s}  | \forall s \in J_{f,n}\}. $ Meanwhile, the following constraints should be satisfied.
	\begin{itemize}
		\item Successful decoding  of each file segment at the requesting user: 
		\begin{align}\label{constraint:decoding}
			R_{f,n,s} \geq R_f^I, \ \ \forall s \in J_{f,n}.
		\end{align}
		\item Successful decoding  of each file segment at the selected cache nodes: 
		\begin{align}\label{constraint:cache_decoding}
			R_{f,n,s}^c \geq R^I_f, \ \ \forall c \in \mathbf{c}_{f,n,s},  s \in J_{f,n}.
		\end{align}	
		\item  Peak power constraint:
	\begin{align}\label{constraint:power_constraint}
	P_{f,n,s} \leq P_B, \ \ \forall s \in J_{f,n},n.
    \end{align}
    where $P_B$ is a instantaneous power constraint at the BS.

	\end{itemize}

\end{Definition}

As mentioned in Remark \ref{rem:two-time-scale}, we shall minimize the total transmission resource consumption on the popular files at the BS by offloading traffics to the cache nodes, so that more transmission resource can be spared for other downlink data or uplink transmission. Let $ \mathcal{C}_{f,n}^s = \cup_{\{\forall i | \mathcal{B}_{f,s}^i=1\}} \mathcal{C}_i $ be the area where the requesting users are able to receive the $ (f,s) $-th file segment from cache nodes,  we use the following cost function to measure the weighted sum of the transmission energy ($ P_{f,n,s}N_{f,n,s} $) and the number of transmission symbols ($ N_{f,n,s} $) of the BS  spent on the $ s $-th segment for the $ (f,n) $-th user.
\begin{equation}\nonumber
g_{f,n,s} \!(\!P_{f,n,s}, N_{f,n,s}\!) = I\!(\!\mathbf{l}_{f,n} \notin \mathcal{C}_{f,n}^s\!) \!\times\! (\!P_{f,n,s}N_{f,n,s} + w N_{f,n,s}\!),
\end{equation}
where $ w $ is the weight on the number of transmission symbols, and $ I(\cdot) $ is the indicator function.

\begin{Remark}[Trade-off between transmission time and energy] If the minimization objective is the total number of transmission symbols spent on one file, the BS will always use the peak power, which might not be energy-efficient. When the traffic load of the BS is not heavy, saving energy is an important design criterion of resource allocation. On the other hand, if  the minimization objective is the total transmission energy spent in one file, it is possible that the BS will use a small power in downlink multicast, which may occupy a large amount of transmission symbols. Thus it is not suitable for heavy traffic load. As a result, we choose a linear combination of both metrics, where the weight on the number of transmission symbols ($ w $) can be chosen according to the traffic load.
\end{Remark}

 Hence the average cost spent on the overall lifetime of the $ f $-th file is given by
\begin{align}\nonumber
&\overline{g}_{f} \left( \{\Omega_{f,n}|\forall n\} \right) \nonumber \\ 
&= \sum_N \mathbb{E}_{\eta,\rho,\mathcal{T}} \left[ \frac{(\lambda_f T_f)^N}{N!} e^{-\lambda_f T_f} \sum_{n=1}^{N} \sum_{s=1}^{N_f} g_{f,n,s}(\Omega_{f,n})\bigg|S_{f,1}\right],
\end{align}
where the expectation is taken over all possible large-scale channel fading (including the shadowing effect $ \eta $ and requesting user's pathloss $ \rho $) and the remaining lifetimes after each file request $\mathcal{T}=\{T_{f,n}^f|\forall n\}$. The summation on $ N $ is to take expectation on the random number of requests as elaborated in (\ref{eqn:request}). Hence the overall system cost function on popular files is $
\overline{G}(\{\Omega_{f,n}|\forall f,n\}) =  \sum_{f=1}^{F} \overline{g}_{f} \left( \{\Omega_{f,n}|\forall n\} \right),$
where $ F $ is the total number of popular files considered in the optimization. The system optimization problem can be written as

\begin{Problem}[Overall System Optimization] \label{prob:overall}
\begin{eqnarray}
&\min\limits_{\{\Omega_{f,n}|\forall f,n\}} &\overline{G}(\{\Omega_{f,n}|\forall f,n\}) \nonumber\\
&s.t.&  \mbox{Constraints in }(\ref{constraint:decoding}-\ref{constraint:power_constraint}).\nonumber
\end{eqnarray}
\end{Problem}

In this paper, we consider the delivery of popular files, and there is sufficient cache space in each cache node to save the popular files in their lifetime. In fact, since all the cached files are received from downlink and all the files have finite lifetimes, the cache size may not be the critical bottleneck of the cache-enabled network considered in this paper.  For example, suppose that one BS is transmitting popular files with overall data rate of $ 100 $ Mbps, and the lifetime of each file is $ 24 $ hours. Then the maximum required cache capacity in one cache node is around $ 1 $ Tera bytes, which is mild. Therefore, the cache size limitation is ignored in this paper. Moreover, as mentioned in Remark \ref{rem:two-time-scale}, there is no constrain on $ \{N_{f,n,s}|\forall f,n,s\} $ (no transmission buffer overflow at the BS). Then the above Problem 1 can be further decoupled into the following per-file sub-problems.

\begin{Problem}[Optimization on the $ f $-th File]\label{prob:main}
\begin{eqnarray}
\overline{g}_f^* = &\min\limits_{\{\Omega_{f,n}|\forall n\}} &\overline{g}_f(\{\Omega_{f,n}|\forall n\}) \nonumber\\
&s.t.&  \mbox{Constraints in }(\ref{constraint:decoding}-\ref{constraint:power_constraint}).\label{eqn:cont}
\end{eqnarray}
\end{Problem}

\begin{figure*}
\begin{align}\label{eqn:W}
W(S_{f},T)=\min\limits_{\{\Omega_{f,k}|\forall k=n+1,...\}} \sum_N \mathbb{E}_{\eta,\rho,\mathcal{T}} \left[ \frac{(\lambda_f T)^N}{N!} e^{-\lambda_f T} \sum_{n=1}^{N} \sum_{s=1}^{N_f} g_{f,n,s}(\Omega_{f,n}) \bigg|S_{f}\right]
\end{align}	
\begin{align}\label{eqn:opt-W}
\Omega_{f,n}^{\dagger}(S_{f,n},T_{f,n}^f)  =\arg \min_{\Omega_{f,n}}\bigg \{ \sum_{s} g_{f,n,s}(\Omega_{f,n})  +  \mathbb{E}_{S_{f,n+1}}[W(S_{f,n+1},T_{f,n}^f)|S_{f,n} ] \bigg\}
\end{align}
\begin{align}
V_{N_R-n+1}(S_{f,n})=\min_{\Omega_{f,n}(S_{f,n})}\bigg\{ \sum_{s} g_{f,n,s}(\Omega_{f,n})+\sum\limits_{S_{f,n+1}}{V_{N_R-n}(S_{f,n+1})\Pr(S_{f,n+1}|S_{f,n},\Omega_{f,n})} \bigg\}, \forall S_{f,n},\label{eqn:bellman-fix}
\end{align}
\begin{eqnarray}\label{eqn:bellman-reduce}
\widetilde{V}_{N_R-n+1}(\widetilde{S}_{f,n})=\min_{\Omega_{f,n}(\widetilde{S}_{f,n})}\mathbb{E}_{\eta,\rho} \bigg\{ \sum_{s}\! g_{f,n,s}(\Omega_{f,n}) +\sum\limits_{\widetilde{S}_{f,n+1}}{\widetilde{V}_{N_R-n}(\widetilde{S}_{f,n+1})\Pr(\widetilde{S}_{f,n+1}|{S}_{f,n},\Omega_{f,n})}  \bigg\}, \forall \widetilde{S}_{f,n},
\end{eqnarray}
\begin{eqnarray}\label{eqn:bellman-random}
\Omega_{f,n}^{*}(S_{f,n},T_{f,n}^f)=\arg \min_{\Omega_{f,n}} \sum_{s} g_{f,n,s}(\Omega_{f,n}) + \sum\limits_{N,\widetilde{S}_{f,n}} \!\! \!\frac{(\lambda_f T_{f,n}^f)^N}{N!} e^{-\lambda_f T_{f,n}^f}{\widetilde{V}_{N}(\widetilde{S}_{f,n+1})\Pr(\widetilde{S}_{f,n+1}|{S}_{f,n},\Omega_{f,n})}, \forall S_{f,n},T_{f,n}^f,
\end{eqnarray}
\hrulefill	
\end{figure*}

Hence, the scheduling policy for the $ f $-th file $ \{\Omega_{f,n} | \forall n\} $ depends only on the per-file system state $ S_{f,n} $ and the its remaining lifetime $ T_{f,n}^f $, i.e. $$\Omega_{f,n}( S_{f,n}, T_{f,n}^f)= \{(P_{f,n,s}, N_{f,n,s}, \mathbf{c}_{f,n,s}) | \forall s \in J_{f,n}\}.$$ 
In order to solve the Problem \ref{prob:main}, we first define the {\em{cost-to-go function}} $W(S_{f},T)$ as the minimum average cost on the $ f $-th file, given the initial per-file system state $S_{f}$ and remaining lifetime $T$, as \eqref{eqn:W}.
Hence, suppose that the per-file system state and the remaining lifetime for the $ n $-th request of the $ f $-th file are $S_{f,n}$ and $T_{f,n}^f$ respectively, the optimal reactive multicast policy for this file transmission $ \Omega_{f,n}^{\dagger}(S_{f,n},T_{f,n}^f)  $ is given by minimizing the summation of current transmission cost and the minimum average future cost, which is given by \eqref{eqn:opt-W}, where the constraints in (\ref{constraint:decoding}-\ref{constraint:power_constraint}) should be satisfied. If $ (S_{f,n},T_{f,n}^f) $ is treated as the system state, its evolution is Markovian. Notice that the number of requests is random and $ T_{f,n}^f $ is continuous, it is difficult to find the cost-to-go function $ W $ accurately and solve the above optimization problem via the standard solution of MDP. In the following section, we shall propose an approximation approach via an MDP with a fixed number of stages.

\subsection{Approximation of Cost-to-go Function} \label{sec:approx-cost-to-go}

In order to solve Problem \ref{prob:main}, we first introduce the following intermediate MDP problem with fixed $ N_R $ requests (stages) on the $ f $-th file, which is similar to Problem \ref{prob:main} except for the stage number.

\begin{Problem}[Optimization with a Fixed Request Number] \label{prob:fix}
	\begin{eqnarray}
	&\min\limits_{\{\Omega_{f,n}|\forall n\}} &\mathbb{E}_{\eta,\rho} [\sum_{n=1}^{N_R}\sum_{s=1}^{N_f} g_{f,n,s}]\nonumber\\
	&s.t.& \mbox{Constraints in }(\ref{constraint:decoding}-\ref{constraint:power_constraint}), \nonumber
	\end{eqnarray}
	where $ N_R $ is the number of requests on the $ f $-th file.
\end{Problem}

The optimal solution of Problem \ref{prob:fix} can be deduced via the Bellman's equations in \eqref{eqn:bellman-fix}, where $V_{N_R-n+1}(S_{f,n}) $ is the {\em value function} of the $ n $-th stage, and $ S_{f,n+1} $ denotes the next state of the $ f $-th file given the current state $ S_{f,n} $, the constraints in (\ref{constraint:decoding}-\ref{constraint:power_constraint}) shall be satisfied in minimizing the right-hand-side of the above equation. The state transition probability can be written as 
\begin{align} 
&\Pr(S_{f,n+1}|S_{f,n},\Omega_{f,n}) \nonumber \\
=&\Pr({\rho}_{f,n+1})\prod_{\forall s}\Pr({\eta}_{f,n+1,s}) \prod_{\forall c,s}\Pr({\eta}_{f,n+1,s}^c) \nonumber \\
& \times I \bigg[\{\mathcal{B}_{f,s}^c (n+1)|\forall s, c\} \bigg], \nonumber
\end{align}
where $ \mathcal{B}_{f,s}^c(n+1) $ is the cache status after taking the action $ \Omega_{f,n} (S_{f,n}) $, $ I $ is the indicator function. In fact, $ V_{N_R-n+1}(S_{f,n}) $ measures the average remaining cost of the $ f $-th file from the $ n $-th transmission to the $ N_R $-th transmission, given the system state of the $ n $-th stage $ S_{f,n} $. Since the large-scale fading is i.i.d. in each file transmission, the expectation on large-scale fading can be taken on both side of the above equation. Hence we have the following conclusion.

\begin{Lemma}[Bellman's Equation with Reduced Space]\label{Lemma:reduce_space}
The optimal control policy of Problem \ref{prob:fix} is the solution of the Bellman's equation with reduced state space as \eqref{eqn:bellman-reduce}, where $\widetilde{S}_{f,n}= \{\mathcal{B}_{f,s}^c \in S_{f,n}|\forall c,s \}$ denotes the cache state of the $ f $-th file, $\widetilde{V}_{N_R-n}(\widetilde{S}_{f,n+1})=\mathbb{E}_{\eta,\rho}[V_{N_R-n}(S_{f,n+1})] $, and $ \Omega_{f,n}(\widetilde{S}_{f,n}) = \{\Omega_{f,n}(S_{f,n})|\forall \rho_{f,n} , \eta_{f,n,s}, \eta_{f,n,s}^c \} $.
\end{Lemma}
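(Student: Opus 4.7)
The plan is to prove the lemma by applying $\mathbb{E}_{\eta,\rho}$ to both sides of the full-state Bellman equation (\ref{eqn:bellman-fix}) and verifying that the resulting identity coincides with the reduced Bellman equation (\ref{eqn:bellman-reduce}). The key structural observation is that the transition kernel $\Pr(S_{f,n+1}|S_{f,n},\Omega_{f,n})$ stated in the paper factorizes into a deterministic update of the cache component $\widetilde{S}_{f,n+1}$ (expressed by the indicator on $\{\mathcal{B}_{f,s}^c(n+1)\}$) multiplied by an independent draw of the next-stage large-scale fading $\{\rho_{f,n+1},\eta_{f,n+1,s},\eta^c_{f,n+1,s}\}$. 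Because the fading is i.i.d.\ across stages and enters the stage cost $g_{f,n,s}$ only through the current-stage realization, it contributes no memory across stages, which is exactly what makes the state reduction feasible.

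First I would substitute the factorized kernel into the continuation term $\sum_{S_{f,n+1}} V_{N_R-n}(S_{f,n+1})\Pr(S_{f,n+1}|S_{f,n},\Omega_{f,n})$ of (\ref{eqn:bellman-fix}) and split the summation into an outer sum over the discrete cache state $\widetilde{S}_{f,n+1}$ and an inner expectation over the independent next-stage fading. By the definition $\widetilde{V}_{N_R-n}(\widetilde{S}_{f,n+1})=\mathbb{E}_{\eta,\rho}[V_{N_R-n}(S_{f,n+1})]$, the inner expectation collapses and the continuation term becomes $\sum_{\widetilde{S}_{f,n+1}} \widetilde{V}_{N_R-n}(\widetilde{S}_{f,n+1})\Pr(\widetilde{S}_{f,n+1}|S_{f,n},\Omega_{f,n})$. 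Next I would take $\mathbb{E}_{\eta,\rho}$ of both sides of (\ref{eqn:bellman-fix}); the LHS becomes $\widetilde{V}_{N_R-n+1}(\widetilde{S}_{f,n})$ directly by the definition of $\widetilde{V}$, and the RHS is the expectation, over the current-stage fading, of the minimum of the stage cost plus the just-derived continuation term.

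The main obstacle will be justifying the interchange of $\min$ and $\mathbb{E}_{\eta,\rho}$ on the RHS. The argument is that a state-dependent policy $\Omega_{f,n}(S_{f,n})$ is, for a fixed cache state $\widetilde{S}_{f,n}$, equivalent to a family of fading-indexed actions $\Omega_{f,n}(\widetilde{S}_{f,n})=\{\Omega_{f,n}(S_{f,n})\,|\,\forall \rho_{f,n},\eta_{f,n,s},\eta^c_{f,n,s}\}$, which is precisely how the policy is (re)defined in the lemma statement. Because the feasibility constraints (\ref{constraint:decoding})--(\ref{constraint:power_constraint}) depend continuously on the fading and the inner objective is continuous in the transmission parameters, a measurable pointwise minimizer exists, so that $\mathbb{E}_{\eta,\rho}[\min_{\Omega_{f,n}(S_{f,n})}(\cdot)] = \min_{\Omega_{f,n}(\widetilde{S}_{f,n})} \mathbb{E}_{\eta,\rho}[(\cdot)]$. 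Combining this interchange with the collapsed continuation term yields (\ref{eqn:bellman-reduce}) exactly, and since the minimizing family on the RHS supplies the optimal action at every $(S_{f,n})$ via the pointwise selection, the optimal control policy of Problem \ref{prob:fix} is recovered from the reduced Bellman recursion.
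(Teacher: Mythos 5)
Your proposal is correct and follows essentially the same route as the paper's Appendix A: factorize the transition kernel to collapse the continuation term into $\sum_{\widetilde{S}_{f,n+1}}\widetilde{V}_{N_R-n}(\widetilde{S}_{f,n+1})\Pr(\widetilde{S}_{f,n+1}|S_{f,n},\Omega_{f,n})$, take $\mathbb{E}_{\eta,\rho}$ of both sides of (\ref{eqn:bellman-fix}), and exchange the minimization with the expectation by viewing $\Omega_{f,n}(\widetilde{S}_{f,n})$ as a fading-indexed family of actions. Your explicit measurable-selection justification of the $\min$--$\mathbb{E}$ interchange is a detail the paper leaves implicit, but the argument is the same.
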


\begin{proof}
Please refer to Appendix A.
\end{proof}

The standard {\em value iteration} can be used to solve the Bellman's equations (\ref{eqn:bellman-reduce}), and obtain the value functions $ \widetilde{V}_{N_R-n+1} $ ($ \forall n$) as in \cite{Bertsekas2000Dynamic}. In the following lemma, the cost-to-go function $ W $ defined in (\ref{eqn:W}) can be lower-bounded via the value functions $ \widetilde{V}_{N_R-n+1} $ ($ \forall n$).

\begin{Lemma}[Lower Bound of Cost-to-Go Function]\label{lem:cost_lower_bound}
	With the value function $ \widetilde{V}_{N_R-n+1}(\widetilde{S}_{f,n}) $ ($ \forall n $), given the per-file system state $ S_{f,n} $, remaining lifetime $ T_{f,n}^f $ and reactive multicast policy $ \Omega_{f,n} $ for the $ n $-th request of the $ f $-th file, the minimum average future cost is lower-bounded as 
\begin{align}
	 &\mathbb{E}_{S_{f,n+1}}\![W(\!S_{f,n+1},T_{f,n}^f\!)|S_{f,n},\Omega_{f,n}] \nonumber\\
	\geq&\!\!\!\!\! \sum\limits_{N,\widetilde{S}_{f,n+1}} \!\! \!\!\!\!\!\frac{(\lambda_f T_{f,n}^f)^N}{N!} e^{-\lambda_f T_{f,n}^f}{\widetilde{V}_{N}(\!\widetilde{S}_{f,n+1}\!)\!\Pr(\widetilde{S}_{f,n+1}|{S}_{f,n},\Omega_{f,n})}.
\end{align}
\end{Lemma}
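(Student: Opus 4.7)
The plan is to condition on the number of future requests $N$ within the remaining lifetime $T_{f,n}^f$, compare the resulting cost against the fixed-horizon value function $V_N$ of Problem \ref{prob:fix}, and then apply Lemma \ref{Lemma:reduce_space} to reduce to the cache-state $\widetilde{V}_N$ appearing on the right-hand side of the claim.

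First I would fix any feasible causal policy $\pi=\{\Omega_{f,k}|k\geq n+1\}$ for the random-horizon problem starting from $S_{f,n+1}$. Because the PPP is memoryless, the number of remaining requests within $T_{f,n}^f$ is Poisson$(\lambda_f T_{f,n}^f)$, so
\begin{align*}
W^\pi(S_{f,n+1},T_{f,n}^f) = \sum_{N\geq 0}\frac{(\lambda_f T_{f,n}^f)^N}{N!}e^{-\lambda_f T_{f,n}^f}\,C_N^\pi(S_{f,n+1}),
\end{align*}
where $C_N^\pi(S_{f,n+1})$ denotes the expected total cost under $\pi$ conditional on $N$ subsequent arrivals. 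For each fixed arrival-time realization, $\pi$ prescribes a deterministic action at each of the $N$ stages and is therefore a feasible stage-indexed policy for Problem \ref{prob:fix} with $N_R=N$ initialized at $S_{f,n+1}$, since neither $g_{f,k,s}$ nor the per-file transition kernel depends on arrival times directly. Hence the conditional expected cost is at least $V_N(S_{f,n+1})$, and averaging over arrival times preserves the bound $C_N^\pi(S_{f,n+1})\geq V_N(S_{f,n+1})$. Taking $\min_\pi$ then gives
\begin{align*}
W(S_{f,n+1},T_{f,n}^f)\geq \sum_{N\geq 0}\frac{(\lambda_f T_{f,n}^f)^N}{N!}e^{-\lambda_f T_{f,n}^f}\,V_N(S_{f,n+1}).
\end{align*}

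Finally, I would take $\mathbb{E}_{S_{f,n+1}}[\cdot|S_{f,n},\Omega_{f,n}]$ on both sides. Because the new large-scale fading in $S_{f,n+1}$ is i.i.d.\ and conditionally independent of the cache transition given $\widetilde{S}_{f,n+1}$, Lemma \ref{Lemma:reduce_space} yields $\mathbb{E}_{\eta,\rho}[V_N(S_{f,n+1})|\widetilde{S}_{f,n+1}]=\widetilde{V}_N(\widetilde{S}_{f,n+1})$, and summing against $\Pr(\widetilde{S}_{f,n+1}|S_{f,n},\Omega_{f,n})$ reproduces exactly the right-hand side of the lemma. The main obstacle I anticipate is the policy-comparison step: a random-horizon policy $\pi$ can consult the remaining lifetime $T_{f,k}^f$ (equivalently, the arrival-time pattern), which is not visible to a Problem \ref{prob:fix} policy, so I must argue carefully that this extra information is useless once $N$ is conditioned on, i.e.\ that for every arrival-time realization the restriction of $\pi$ is an \emph{admissible} Problem \ref{prob:fix} policy and hence its cost is bounded below by the fixed-horizon minimum $V_N$.
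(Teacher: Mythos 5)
Your proposal is correct and follows essentially the same route as the paper's Appendix B: condition on the Poisson-distributed number of remaining requests, exchange the minimization with the sum over $N$ (so that each conditional cost is bounded below by the fixed-horizon optimum $\widetilde{V}_N$), and then average over the next state using the reduced-state value functions of Lemma \ref{Lemma:reduce_space}. The admissibility concern you flag—that a random-horizon policy may consult arrival times unavailable to a Problem \ref{prob:fix} policy—is glossed over in the paper's one-line justification of inequality (a), and your resolution (such extra, payoff-irrelevant information cannot beat the fixed-horizon optimum) is the right way to close it.
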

\begin{proof}
	Please refer to Appendix B.
\end{proof}

Using the above lower bound to approximate the cost-to-go function, the optimal scheduling policy in (\ref{eqn:opt-W}) becomes suboptimal as follows.

where the constraints in (\ref{constraint:decoding}-\ref{constraint:power_constraint}) should be satisfied.

It can be observed from (\ref{eqn:bellman-random}) that the scheduling policy for the $ n $-th transmission can be obtained by minimizing the sum of the current transmission cost $ \sum_{s} g_{f,n,s}(\Omega_{f,n}) $ and the lower bound of average future transmission cost $\sum\limits_{N,\widetilde{S}_{f,n+1}}\!\!\!\!\!\frac{(\lambda_f T_{f,n}^f)^N}{N!} e^{-\lambda_f T_{f,n}^f}{\widetilde{V}_{N}(\widetilde{S}_{f,n+1})\Pr(\widetilde{S}_{f,n+1}|{S}_{f,n},\Omega_{f,n})} $, where the latter depends on the value functions $ \widetilde{V}_{N}(\widetilde{S}_{f,n+1})  $. In fact, the state space of $ \widetilde{S}_{f,n+1} $ is huge, which grows exponentially with respect to the number of cache nodes $ N_C $ and the number of segments $ N_f $. The accurate evaluation of the value functions is computationally prohibitive. In the following section, we shall propose (1) an analytical approximation of the value functions, such that the computation complexity can be essentially reduced; (2) an analytical lower bound on the cost-to-go function $ W $, such that the gap between the proposed sub-optimal policy and the optimal scheduling policy can be bounded.

\begin{figure*}
	\begin{equation}\label{eqn:app-v}
	\widetilde{V}_{N_R-n+1}(\widetilde{S}_{f,n}) \approx\underbrace{ \widetilde{V}_{N_R-n+1}(\widetilde{S}_{f}^*)+  \sum_{\{(i,s)|\forall \mathcal{B}^i_{f,s}(\widetilde{S}_{f,n}) =0\}} \bigg( \widetilde{V}_{N_R-n+1}(\widetilde{S}_{f}^{i,s})-\widetilde{V}_{N_R-n+1}(\widetilde{S}_{f}^*)\bigg)}_{\mbox{denote as }\widehat{V}_{N_R-n+1}(\widetilde{S}_{f,n})},
	\end{equation}
	\hrulefill
\end{figure*}

\section{Low-Complexity Solution via Approximate MDP}\label{sec:approximation}

In this section, we shall propose a novel linear approximation approach on the value function $ \widetilde{V}_{N_R-n+1}$ ($ \forall n $), derive the scheduling policy given the current system state and approximated value functions, and analyze the approximation error. We shall also propose a reinforcement learning algorithm for evaluating the approximated value functions with unknown distribution $ \mathcal{F} $ of the requesting users. 

\subsection{Approximation of Value Function} \label{sub:app}

We first define the notations for the following reference cache states.
\begin{itemize}
	\item $\widetilde{S}_{f}^*= \{\mathcal{B}_{f,s}^{c} =1| \forall c, s\}$ is the cache state of the $f$-th file where all the cache nodes have successfully decoded the whole file.
	
	\item $
	\widetilde{S}_{f}^{i,s}  \{\mathcal{B}_{f,s}^{i} = 0, \mathcal{B}_{f,t}^{j} =1| \forall (j,t) \neq (i,s) \} $ is the cache state of $ f $-th file where  all the cache nodes have successfully decoded the whole file except the $ s $-th segment at the $ i$-th cache node. 
\end{itemize}
Hence, we approximate the value function $ \widetilde{V}_{N_R-n+1}(\widetilde{S}_{f,n}) $ linearly as \eqref{eqn:app-v}, where $\mathcal{B}^i_{f,s}(\widetilde{S}_{f,n})  $ means the parameter of $ \mathcal{B}^i_{f,s} $ in the cache state $ \widetilde{S}_{f,n} $. An example of approximated value function is elaborated below.
\begin{figure}[tb]
	\centering
	\includegraphics[height=90pt,width=220pt]{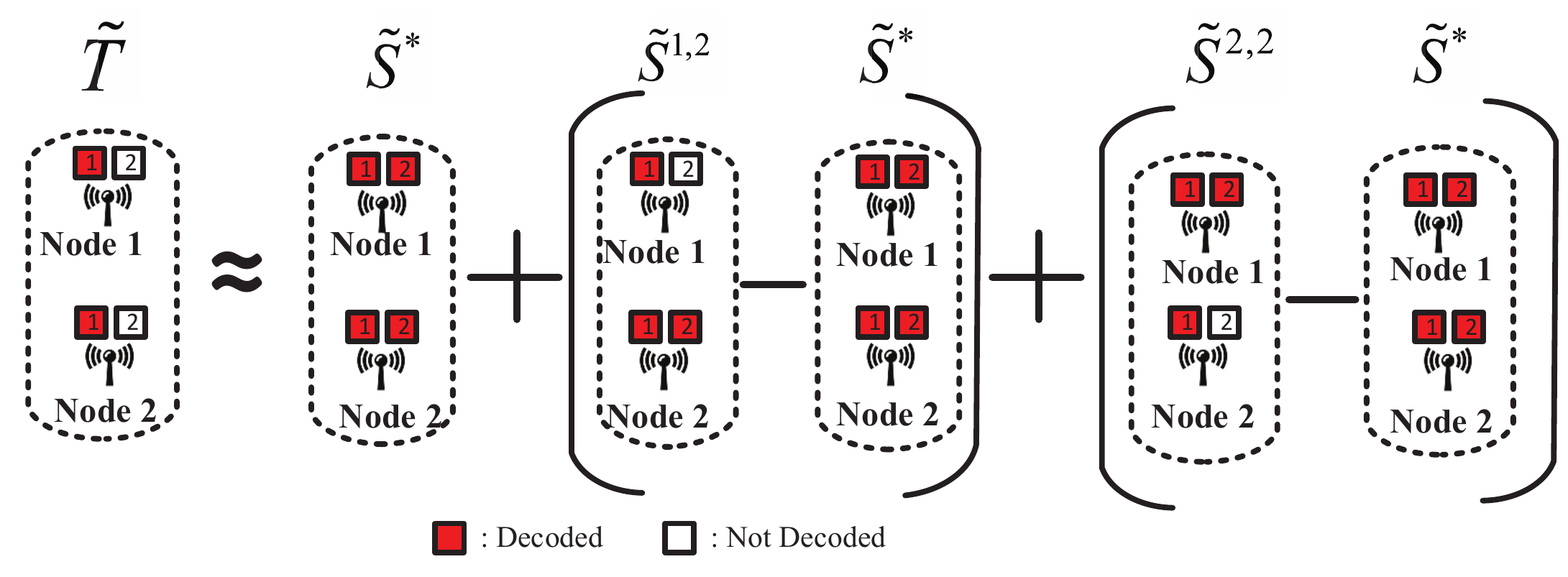}
	\caption{One example of approximated value function.}
	\label{fig:example2}
\end{figure}
\begin{Example}
	An illustrated in Fig. \ref{fig:example2}, there are two cache nodes and the downlink file (say the $ f $-th file) is divided into two segments. For the system state $  \widetilde{T} = [\mathcal{B}^1_{f,1}, \mathcal{B}^1_{f,2}, \mathcal{B}^2_{f,1}, \mathcal{B}^2_{f,2}] = [1, 0, 1, 0] $, the value function on the $ n $-th stage can be approximated as
	\begin{eqnarray}
	\widetilde{V}_{N_R-n+1}(\widetilde{T}) &\approx&	\widehat{V}_{N_R-n+1}(\widetilde{T}) \nonumber\\ &=&\widetilde{V}_{N_R-n+1}(\widetilde{S}_{f}^*) \nonumber\\
	&&+\bigg( \widetilde{V}_{N_R-n+1}(\widetilde{S}_{f}^{1,2})-\widetilde{V}_{N_R-n+1}(\widetilde{S}_{f}^*)\bigg) \nonumber\\
	&&+\bigg( \widetilde{V}_{N_R-n+1}(\widetilde{S}_{f}^{2,2})-\widetilde{V}_{N_R-n+1}(\widetilde{S}_{f}^*)\bigg), \nonumber
	\end{eqnarray}
	where the cache states $ \widetilde{S}_{f}^{1,2} $ and $ \widetilde{S}_{f}^{2,2} $ are illustrated in Fig. \ref{fig:example2}.	In the right hand side of the above approximation, the first term counts the transmission cost for the users outside the coverage region of the cache nodes; the second term approximates the cost on the second segment transmission to the users within the coverage region of the first cache node $ \mathcal{C}_1 $; and the third term approximates the cost on the second segment transmission to the users within the coverage region of the second cache node $ \mathcal{C}_2 $. Note that there is no transmission cost on the first segment for the users within $ \mathcal{C}_1 \cup \mathcal{C}_2 $.
\end{Example}

In order to apply this approximation on all value function, it is necessary to obtain the value of $\widetilde{V}_{N_R-n+1}(\widetilde{S}_{f}^*)$ and $\widetilde{V}_{N_R-n+1}(\widetilde{S}_{f}^{i,s})$ for all $ n,i $, and $ s $ via (\ref{eqn:bellman-reduce}). In the following, we provide the analytically expressions for them with the distribution knowledge of the requesting users. Moreover, an online learning algorithm is proposed in Section \ref{sub:learning} for the evaluation of $\widetilde{V}_{N_R-n+1}(\widetilde{S}_{f}^*)$ and $\widetilde{V}_{N_R-n+1}(\widetilde{S}_{f}^{i,s})$ with unknown spatial distribution of requesting users.

\subsubsection{Evaluation of $ \widetilde{V}_{N_R-n+1}(\widetilde{S}_{f}^*) $}
\begin{figure*}
 \begin{align}\label{eqn:v_default one fragment}
 \widetilde{V}_{N_R-n+1}(\widetilde{S}_{f}^{i,s})=&\mathbb{E}_{\eta,\rho}[\overline{G}_n^1(\widetilde{S}_f^{i,s})|R_{f,n,s} \leq R_{f,n,s}^i]\Pr(R_{f,n,s} \leq R_{f,n,s}^i)\nonumber \\ &+ \mathbb{E}_{\eta,\rho}[\min \{\overline{G}_n^2(\widetilde{S}_f^{i,s}), \overline{G}_n^3(\widetilde{S}_f^{i,s})\}|R_{f,n,s} > R_{f,n,s}^i]\Pr(R_{f,n,s}> R_{f,n,s}^i).
 \end{align}
 \begin{align}
 \widetilde{V}_{N_R-n+1}(\widetilde{S}_{k,n})  \approx \frac{N_k R_k^I}{N_f R_f^I}\widetilde{V}_{N_R-n+1}(\widetilde{S}_{f}^*)
 + \sum_{\forall i} \frac{R_k^I}{R_f^I}\bigg( \!\sum_{\forall s} I[\mathcal{B}^i_{k,s}(\widetilde{S}_{k,n})=0] \! \bigg)  \times\bigg( \! \widetilde{V}_{N_R-n+1}(\widetilde{S}_{f}^{i,1})-\widetilde{V}_{N_R-n+1}(\widetilde{S}_{f}^*)\!\bigg).\label{eqn:value_extend}
 \end{align}
 	\hrulefill
\end{figure*}

Note that the cache state $ \widetilde{S}_{f}^* $ represents the situation that all the cache nodes have already decoded the $ f $-th file, the purpose of downlink transmission is only to make sure that the requesting users, which are outside of the coverage region of any cache node, can decode the downlink file. Hence it is clear that
\begin{align}\nonumber
\widetilde{V}_{N_R-n+1}\!(\!\widetilde{S}_{f}^*\!) \!=& (N_R-n+1)\Pr(\mathbf{l}_{f,n} \notin \mathcal{C}_{f,n}^s)\nonumber \\ 
&\!\!\!\!\times\mathbb{E}_{\rho,\eta}\! \bigg[\!\!\sum_s \!\!\!\min \limits_{P_{f,n,s} \atop N_{f,n,s}} \!\!\! P_{f,n,s}N_{f,n,s} \!\!+\! w N_{f,n,s} \! \bigg| \mathbf{l}_{f,n} \!\!\notin\!\! \mathcal{C}_{f,n}^s\!\!\bigg] \nonumber \\
&s.t. \quad \mbox{Constraints in }(\ref{constraint:decoding}-\ref{constraint:power_constraint}).\nonumber
\end{align}
The above value function can be calculated with analytical expression, which is elaborated below.

\begin{Lemma}\label{Lemma:Segment} In the high SINR regime,
the analytical expression of the value function $ \widetilde{V}_{N_R-n+1}(\widetilde{S}_{f}^*) $ ($n=1,2,\cdots,N_R$) is
\begin{align}\nonumber
\widetilde{V}_{N_R-n+1}\!(\!\widetilde{S}_{f}^*\!) 
=& (N_R-n+1)\Pr(\mathbf{l}_{f,n} \notin \mathcal{C}_{f,n}^s) \nonumber\\
&\!\!\!\!\times\mathbb{E}_{\rho,\eta} \bigg[ \!\!\sum_s \!(\! P_{f,n,s}^*N_{f,n,s}^* \!+\! w N_{f,n,s}^* \!)\! \bigg| \mathbf{l}_{f,n} \!\!\notin\! \mathcal{C}_{f,n}^s \!\!\bigg], \nonumber 
\end{align}
where the optimal power $
P_{f,n,s}^* =\min\{\frac{w} { \mathbb{W}(\frac{2^{\theta}w}{e}) },P_B \}$, the optimal transmission symbol number $
N_{f,n,s}^* =\max\{\frac {R_f^I\ln(2)}{\alpha[   \mathbb{W}(\frac{2^{\theta}w}{e})+1 ]}, \frac{R_f^I}{\alpha[\theta+\log_2(P_B)]}\}$, 
$\theta =	\mathbb{E}_{\mathbf{h}_{f,n,s}} \left[ \log_2 \left(  \frac{||\mathbf{h}_{f,n,s}||^2}{N_T (\sigma^2_z+I_{f,n})}\right) \right] $, 
and $\mathbb{W}(x)$ is the Lambert-W function \cite{W}.
\end{Lemma}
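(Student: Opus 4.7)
The plan is to leverage the fact that the reference cache state $\widetilde{S}_f^*$ is absorbing: since every cache node already holds the full file, no further file placement is possible nor needed, so choosing $\mathbf{c}_{f,n,s}=\emptyset$ is feasible and the cache-decoding constraint (\ref{constraint:cache_decoding}) is vacuous. Consequently the state never evolves and the dynamic program collapses into $N_R-n+1$ independent single-stage problems. Inside each stage, the only nonzero cost arises when $\mathbf{l}_{f,n}\notin\mathcal{C}_{f,n}^s$, in which case the $N_f$ segments are independent and interchangeable, so the problem reduces to a single per-segment optimization
\[
\min_{P,N}\ (P+w)N \quad \text{s.t.}\ \ N\,\alpha\,\mathbb{E}_{\mathbf{h},I}\!\left[\log_2\!\left(1+\tfrac{\|\mathbf{h}\|^{2}P}{N_T(\sigma_z^{2}+I)}\right)\right]\ge R_f^{I},\ \ P\le P_B.
\]
This factorization gives the prefactor $(N_R-n+1)\Pr(\mathbf{l}_{f,n}\notin\mathcal{C}_{f,n}^s)$ and the stated summation over $s$.

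Next I would invoke the high-SINR approximation $\log_2(1+x)\approx\log_2(x)$ so that the ergodic-rate expectation separates as $\alpha(\log_2 P+\theta)$, where $\theta$ is exactly the quantity defined in the lemma. Since $(P+w)N$ is strictly increasing in $N$ for fixed $P$, the rate constraint must bind at the optimum, giving $N(P)=R_f^{I}/[\alpha(\log_2 P+\theta)]$. Substituting back reduces the problem to the one-dimensional minimization of
\[
\phi(P)=\frac{(P+w)\,R_f^{I}}{\alpha(\log_2 P+\theta)},\qquad 0<P\le P_B.
\]
Standard calculus of $\phi'(P)=0$ yields the stationarity condition $P\ln 2\,(\log_2 P+\theta)=P+w$, which I would massage by the substitution $u=\log_2 P+\theta-1/\ln 2$ (equivalently $P=e^{u+1}\,2^{-\theta}$) into the canonical Lambert-W form $u\,e^{u+1}=2^{\theta}w$, i.e. $u=\mathbb{W}(2^{\theta}w/e)$. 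Back-substitution immediately produces $P^{\dagger}=w/\mathbb{W}(2^{\theta}w/e)$ and $N^{\dagger}=R_f^{I}\ln 2/\{\alpha[\mathbb{W}(2^{\theta}w/e)+1]\}$, matching the lemma's interior expressions.

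Finally, I would incorporate the peak-power constraint $P\le P_B$. A short monotonicity argument—showing $\phi$ is strictly convex (or at least has a unique stationary point that is a minimizer) on the feasible interval—lets me conclude that the constrained optimum is $P^{*}=\min\{P^{\dagger},P_B\}$. When the unconstrained minimizer violates the peak constraint, setting $P=P_B$ and reading off the tight rate constraint gives $N^{*}=R_f^{I}/\{\alpha[\theta+\log_2 P_B]\}$, which exactly yields the $\max\{\cdot,\cdot\}$ form stated for $N_{f,n,s}^{*}$. Taking expectations over $\rho,\eta$ conditional on $\mathbf{l}_{f,n}\notin\mathcal{C}_{f,n}^s$, and multiplying by $(N_R-n+1)\Pr(\mathbf{l}_{f,n}\notin\mathcal{C}_{f,n}^s)$, completes the proof.

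The main obstacle will be the algebra leading to the Lambert-W form: the change of variable must be chosen so that the transcendental equation lands exactly in the $u e^{u}=z$ template, and one must argue that the resulting branch is the principal (real, non-negative) one since $2^{\theta}w/e>0$. A secondary, more conceptual step is justifying the reduction to a per-segment problem; this requires checking that none of the coupling constraints (peak power is per-segment; cache decoding is vacuous; user decoding is per-segment) tie the $N_f$ subproblems together, which is straightforward but worth stating explicitly.
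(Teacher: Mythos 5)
Your proof is correct and follows essentially the same route as the paper's Appendix C: the state $\widetilde{S}_f^*$ is absorbing so the problem decouples into $N_R-n+1$ independent per-segment problems, the high-SINR approximation linearizes the rate in $\log_2 P$, the rate constraint binds, and the first-order condition of $(P+w)R_f^I/[\alpha(\log_2 P+\theta)]$ yields the Lambert-W solution, with the peak-power projection giving the $\min/\max$ forms (the paper leaves the Lambert-W algebra and the $P_B$ case implicit, which you make explicit). One small slip: your change of variable should read $u=\ln P+\theta\ln 2-1$ (this is what is equivalent to $P=e^{u+1}2^{-\theta}$), not $u=\log_2 P+\theta-1/\ln 2$; the final expressions $P^{*}=w/\mathbb{W}(2^{\theta}w/e)$ and $N^{*}=R_f^I\ln 2/\{\alpha[\mathbb{W}(2^{\theta}w/e)+1]\}$ are nonetheless correct.
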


\begin{proof}
	Please refer to Appendix C.	
\end{proof}
\subsubsection{Evaluation of $\widetilde{V}_{N_R-n+1}(\widetilde{S}_{f}^{i,s})$}

Given the cache state $ \widetilde{S}_{f}^{i,s} $ for the $n$-th stage, there are only two possible next cache states $  \widetilde{S}_{f}^{i,s} $ and $  \widetilde{S}_{f}^{*} $ in the $(n+1)$-th stage, which are discussed below.
\begin{itemize}
	\item When $ \rho_{f,n}  \eta_{f,n,s} \leq \rho_i \eta_{f,n,s}^i $, thus $ R_{f,n,s} \leq R_{f,n,s}^i$, the $ i $-th cache node is alway able to decode the $ s $-th file segment give that the transmission constraint (\ref{eqn:cont}) should be satisfied. Thus the next state must be $  \widetilde{S}_{f}^{*} $. In this case, the optimized RHS of (\ref{eqn:bellman-reduce}) is given by
	\begin{eqnarray} \label{eqn:qc}
	\!\!\overline{G}_n^1  (\!\widetilde{S}_f^{i,s}\!)\!\!\!\!\!\!\!&=&\!\!\!\!\!\!\!\!\!\!\!\!\min\limits_{\Omega_{f,n}(\!\widetilde{S}_f^{i,s}\!)} \!\! \sum_{t} \! g_{f,n,t}(\!\widetilde{S}_f^{i,s},\Omega_{f,n}\!) \!+ \!{\widetilde{V}_{N_R-n}(\!\widetilde{S}_{f}^{*}\!)}, \\
	&s.t.&P_{f,n,t}\leq P_B, \forall t \mbox{ and } R_{f,n,t} = {R_f^I},\ \  \forall t. \nonumber
	\end{eqnarray}
	
	\item When $ \rho_{f,n}  \eta_{f,n,s} > \rho_i \eta_{f,n,s}^i $, thus $ R_{f,n,s} > R_{f,n,s}^i$, the BS can choose to deliver the $ s $-th segment to the $ (f,n) $-th user, or both user and the $ i $-th cache node. Hence the optimized  RHS of (\ref{eqn:bellman-reduce}) is given by $
	\mathbb{E}  \bigg\{ \!\min \bigg[ \overline{G}_n^2(\widetilde{S}_f^{i,s}),\overline{G}_n^3(\widetilde{S}_f^{i,s}) \bigg] \!\bigg\},$ 
	where $\overline{G}_n^2$ and $\overline{G}_n^3$ are defined below.
	\begin{eqnarray}\label{eqn:a_u}
	\!\!\overline{G}_n^2\!(\!\widetilde{S}_f^{i,s}\!) \!\!\!\!\!\!\! &=& \!\!\!\!\!\!\!\!\!\!\! \min\limits_{\Omega_f(\!\widetilde{S}_f^{i,s}\!)}\!\! \sum_{t} g_{f,n,t}(\!\widetilde{S}_f^{i,s},\Omega_{f,n}\!) \!\!+\!\!{\widetilde{V}_{N_R-n}(\!\widetilde{S}_{f}^{i,s}\!)}, \\
	 &s.t.&  P_{f,n,t}\leq P_B, \forall t \mbox{ and } R_{f,n,t}={R_f^I}, \  \forall t \nonumber
	\end{eqnarray}
	\begin{eqnarray}\label{eqn:a_i}
	\!\!\overline{G}_n^3(\!\widetilde{S}_f^{i,s}\!)\!\!\!\!\!\!\!&=& \!\!\!\!\!\!\!\!\!\!\!\min\limits_{\Omega_f(\widetilde{S}_f^{i,s})} \!\!\sum_{t} g_{f,n,t}(\!\widetilde{S}_f^{i,s},\Omega_f\!) \!+\! {\widetilde{V}_{N_R-n}(\!\widetilde{S}_{f}^{*}\!)},\\
	&s.t. &P_{f,n,t}\leq P_B, \forall t,  R_{f,n,s}^i \!=\!{ R_f^I}, \nonumber\\ & &R_{f,n,t}\!=\!{R_f^I}, \!  \forall t \neq s. \nonumber
	\end{eqnarray}
\end{itemize}
As a result, the expression of $\widetilde{V}_{N_R-n+1}(\widetilde{S}_{f}^{i,s})$ is summarized by the following lemma.
\begin{Lemma}\label{Lemma:default_one_Segment}
	 The value function  $\widetilde{V}_{N_R-n+1}(\widetilde{S}_{f}^{i,s})$ is given by \eqref{eqn:v_default one fragment}. The asymptotically optimal scheduling parameters for $ \overline{G}_n^1 $ and $ \overline{G}_n^2 $ in high SINR regime are the same as Lemma \ref{Lemma:Segment}. The asymptotically optimal scheduling parameters $\{(P_{f,n,t}^i,N_{f,n,t}^i)|\forall t\}$ for $ \overline{G}_n^3 $ is given by	
	 \begin{align*}
	 &P_{f,n,s}^i=\min \{ \frac{w} { \mathbb{W}(\frac{2^{\theta^i}w}{e}) } ,P_B\},\\&N_{f,n,s}^i=\max \{\frac {R_f^I\ln(2)}{\alpha[   \mathbb{W}(\frac{2^{\theta^i}w}{e})+1 ]},\frac{R_f^I}{\alpha[\theta^i+\log_2(P_B)]} \}, \\
		 & \theta^i=	\mathbb{E}_{\mathbf{h}_{f,n,s}^i} \left[ \log_2 \left(  \frac{||\mathbf{h}_{f,n,s}^i||^2}{N_T (\sigma^2_z+I_{i})}\right) \right],  
	 \end{align*}

	 and $ \forall t\neq s $
	 \begin{align*}
	 &P_{f,n,t}^i =\min \{\frac{w} { \mathbb{W}(\frac{2^{\theta}w}{e}) }, P_B\},\\
	 &N_{f,n,t}^i =\max \{\frac {R_f^I\ln(2)}{\alpha[   \mathbb{W}(\frac{2^{\theta}w}{e})+1 ]},\frac{R_f^I}{\alpha[\theta+\log_2(P_B)]}   \},\\ 
	 &\theta =	\mathbb{E}_{\mathbf{h}_{f,n,t}} \left[ \log_2 \left(  \frac{||\mathbf{h}_{f,n,t}||^2}{N_T (\sigma^2_z+I_{f,n})}\right) \right].
	 \end{align*}   
\end{Lemma}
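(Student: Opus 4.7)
My plan is to proceed in three steps, following the case split that already appears in the discussion immediately before the lemma. First, I would justify the overall decomposition in \eqref{eqn:v_default one fragment} by conditioning on the event $\{R_{f,n,s}\leq R_{f,n,s}^i\}$ versus its complement. Because the large-scale fading is i.i.d.\ across stages, the transition probability from $\widetilde{S}_f^{i,s}$ depends only on the current action and the realization of the channels. When $\rho_{f,n}\eta_{f,n,s}\leq \rho_i\eta_{f,n,s}^i$, the user's decoding constraint \eqref{constraint:decoding} already forces the $i$-th cache node to decode the $s$-th segment, so the only reachable next state is $\widetilde{S}_f^*$, and the right-hand side of \eqref{eqn:bellman-reduce} reduces to $\overline{G}_n^1$. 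Otherwise, the BS may either ignore the cache node (keeping the state $\widetilde{S}_f^{i,s}$ and yielding $\overline{G}_n^2$) or explicitly boost the rate so that the cache node decodes (yielding $\overline{G}_n^3$); optimality of the Bellman operator means the minimum of the two is taken. Averaging over the large-scale fading gives \eqref{eqn:v_default one fragment}.

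Second, for $\overline{G}_n^1$ and $\overline{G}_n^2$ I would observe that, after dropping the automatically-satisfied cache constraint, the per-segment inner optimization is exactly the one solved in Lemma \ref{Lemma:Segment}: minimize $P_{f,n,t}N_{f,n,t}+wN_{f,n,t}$ subject to $R_{f,n,t}\geq R_f^I$ and $P_{f,n,t}\leq P_B$. Since the terminal value terms $\widetilde{V}_{N_R-n}(\widetilde{S}_f^{*})$ (resp.\ $\widetilde{V}_{N_R-n}(\widetilde{S}_f^{i,s})$) do not depend on the optimization variables, the optimizer coincides with the one in Lemma \ref{Lemma:Segment}. This gives the claim that the scheduling parameters for $\overline{G}_n^1$ and $\overline{G}_n^2$ reuse the formulas of Lemma \ref{Lemma:Segment}.

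Third, for $\overline{G}_n^3$ I would handle the two kinds of segments separately. For $t\neq s$ only the user must decode, so the sub-problem is again the Lemma \ref{Lemma:Segment} problem and yields $(P_{f,n,t}^i,N_{f,n,t}^i)$ expressed through $\theta$. For $t=s$ the binding constraint is the cache-node rate $R_{f,n,s}^i=R_f^I$; moreover, since we are in the regime $\rho_{f,n}\eta_{f,n,s}>\rho_i\eta_{f,n,s}^i$, any $(P_{f,n,s},N_{f,n,s})$ that satisfies this cache constraint automatically satisfies $R_{f,n,s}\geq R_f^I$, so the user's constraint is slack and can be ignored at the optimum. Taking the high-SINR approximation inside the expectation, $R_{f,n,s}^i\approx N_{f,n,s}\alpha(\log_2 P_{f,n,s}+\theta^i)$, the equality constraint gives $N_{f,n,s}=R_f^I/[\alpha(\log_2 P_{f,n,s}+\theta^i)]$, and substituting into the objective leaves a one-dimensional minimization of $(P+w)R_f^I/[\alpha(\log_2 P+\theta^i)]$ over $P\in(0,P_B]$. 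Setting the derivative to zero transforms into an equation of the form $(\log_2 P+\theta^i)=(1+w/P)/\ln 2$, whose unconstrained solution is exactly $P=w/\mathbb{W}(2^{\theta^i}w/e)$; incorporating the peak-power constraint produces the stated $\min$ and $\max$ forms.

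The main obstacle I anticipate is the last calculation: verifying that the solution to the transcendental stationarity condition is indeed $w/\mathbb{W}(2^{\theta^i}w/e)$ (and that this critical point is the minimum on the feasible interval rather than a maximum or saddle), together with the reduction from the general capacity expression \eqref{eqn:dl-cache} to the high-SINR form used above. Both of these carry over from the proof of Lemma \ref{Lemma:Segment} in Appendix C, with $\theta$ simply replaced by $\theta^i$; the only additional argument needed here is that, under $R_{f,n,s}>R_{f,n,s}^i$, enforcing the cache-decoding equality makes the user-decoding constraint automatically inactive, so the two sub-problems within $\overline{G}_n^3$ remain decoupled across segments.
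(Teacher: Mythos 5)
Your proposal is correct and follows essentially the same route the paper intends: the case decomposition into $\overline{G}_n^1$, $\overline{G}_n^2$, $\overline{G}_n^3$ is exactly the one given in the text preceding the lemma, and your derivation of the optimal $(P,N)$ for $\overline{G}_n^3$ is the Appendix~C high-SINR first-order-condition argument with $\theta$ replaced by $\theta^i$, which is precisely what the paper means by ``similar to the proof of Lemma~\ref{Lemma:Segment}.'' Your additional observation that the equality $R_{f,n,s}^i=R_f^I$ renders the user's constraint slack when $R_{f,n,s}>R_{f,n,s}^i$ is the one detail the paper leaves implicit, and you handle it correctly.
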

\begin{proof}
The proof is similar to that of Lemma \ref{Lemma:Segment}, and it is omitted here.	
\end{proof}

Hence, it is clear that $\widetilde{V}_{N_R-n+1}(\widetilde{S}_f^{i,s}) = \widetilde{V}_{N_R-n+1}(\widetilde{S}_f^{i,t}), \forall s \neq t$. With the distribution knowledge of large-scale fading, the value functions $\widetilde{V}_{N_R-n+1}(\widetilde{S}_{f}^*)$ and $\widetilde{V}_{N_R-n+1}(\widetilde{S}_{f}^{i,s})$ can be calculated according to above analytical expressions. Moreover, although different files may consist of different number of segments or segment size, the calculation of $\widetilde{V}_{N_R-n+1}(\widetilde{S}_{f}^*)$ and $\widetilde{V}_{N_R-n+1}(\widetilde{S}_{f}^{i,s})$ on one file can be easily extended to the other files. For example, given the cache state $\widetilde{S}_{k,n}$ of the $ k $-th file ($ \forall k \neq f $), the value functions approximation, denoted as $\widetilde{V}_{N_R-n+1}(\widetilde{S}_{k,n})$, can be calculated via $\widetilde{V}_{N_R-n+1}(\widetilde{S}_{f}^*)$ and $\widetilde{V}_{N_R-n+1}(\widetilde{S}_{f}^{i,1})(\forall i)$ for the f-th file as \eqref{eqn:value_extend}.

\subsection{Reactive Multicast Policy } \label{sub:low}

With $\widetilde{V}_{N_R-n+1}(\widetilde{S}_{f}^*)$ and $\widetilde{V}_{N_R-n+1}(\widetilde{S}_{f}^{i,s})$, the value function for arbitrary system state can be approximated via (\ref{eqn:app-v}). Hence the reactive multicast policy, denoted as $ \Omega_{f,n}^{*}({S}_{f,n}, T_{f,n}^f) $, can be obtained. Moreover, as $ \widetilde{V}_{N}(\widetilde{S}_{f,n+1}) $ can be decoupled for each segment, the  optimization problem (\ref{eqn:bellman-random}) can be also decoupled for each segment. Specifically, for the $ s $-th segment ($ \forall s \in J_{f,n}$), the solution of (\ref{eqn:bellman-random}) can be obtained by solving the following problem.
\begin{Problem}[Optimization for the $ s $-th Segment]\label{prob:online-s}
	\begin{eqnarray}
	\{P_{f,n,s}^{*}, N_{f,n,s}^{*}\} \!\!& \!=\! & \!\!\!\!\arg\min g_{f,n,s}({S}_{f,n},\Omega_{f,n}) \nonumber\\
	 &&\!+\! \sum\limits_{N}\bigg\{\frac{(\lambda_f T_{f,n}^f)^N}{N!} e^{-\lambda_f T_{f,n}^f}  \nonumber\\ &&\times \!\!\!\!\!\!\!\!\!\!\sum\limits_{\{i|\forall \mathcal{B}^i_{f,s}(\widetilde{S}_{f,n+1})=0\}}  \!\!\!\! \!\!\!\!\!\! \!\!\! \left[ \widetilde{V}_{N}(\widetilde{S}_{f}^{i,s}) \!-\! \widetilde{V}_{N}(\widetilde{S}_{f}^{*}) \right]\bigg\}\nonumber\\
	&s.t.& \mbox{Constraints in }(\ref{constraint:decoding}-\ref{constraint:power_constraint}),\nonumber
	\end{eqnarray}
where $ \mathcal{B}^i_{f,s}(\widetilde{S}_{f,n+1}) $ represents the next cache state for the $ (f,s) $-th segment in  $ i $-th cache node. Note that the set of receiving cache nodes $ \mathbf{c}^*_{f,n,s} $ can be determined from $ (P_{f,n,s}^{*}, N_{f,n,s}^{*}) $.
\end{Problem}

This is an integrated continuous and discrete optimization, its solution algorithm is summarized below.

\begin{Algorithm}[Scheduling with Approximated Value Function]\label{alg:AMDP}
	
Given the system state $ S_{f,n} $, let $ d_1,d_2,.. $ be the indexes of cache nodes, whose large-scale attenuation to the BS in the $ s $-th segment ($ \forall s \in J_{f,n}$) is worse than the $ (f,n) $-th user. Moreover, without loss of generality, it is assumed that $ \rho_{d_1}\eta_{f,n,s}^{d_1} \leq  \rho_{d_2}\eta_{f,n,s}^{d_2} \leq ... \leq \rho_{f,n,s}\eta_{f,n,s}$. The solution of Problem \ref{prob:online-s} can be obtained by the following steps.
\begin{itemize}
	\item For each $ i $, solve the following optimization problem.
	\begin{align}
	Q_{d_i,s}^{*}({S}_{f,n})=& \min\limits_{P_{f,n,s} \atop N_{f,n,s}} g_{f,n,s}({S}_{f,n},\Omega_{f,n}) \nonumber \\
	 &+ \sum\limits_{N}\bigg\{\frac{(\lambda_f T_{f,n}^f)^N}{N!} e^{-\lambda_f T_{f,n}^f}\nonumber \\ 
	 &\times\bigg[\sum\limits_{j=\{d_1,...,d_{i-1}\}} \!\!\!\!\!\!\widetilde{V}_{N}(\widetilde{S}_{f}^{j,s}) - \widetilde{V}_{N}(\widetilde{S}_{f}^{*})\bigg]\bigg\}\nonumber\\
	s.t.& \quad P_{f,n,s}\leq P_B \mbox{ and }  R_{f,n,s}^{d_i} = R_f^I. \nonumber
	\end{align}
	The solution, denoted as $[P_{f,n,s}^{d_i},N_{f,n,s}^{d_i}]$, can be derived similar to Lemma \ref{Lemma:Segment}. Note that $[P_{f,n,s}^{d_i},N_{f,n,s}^{d_i}]$ are the transmission parameters if the file segment can be decoded in the $ d_i $-th cache node. 
	\item Let $
d^{*}=\arg \min\limits_{d_i} Q_{d_i,s}^* $, the solution of Problem \ref{prob:online-s} is then given by $
[P_{f,n,s}^{*},N_{f,n,s}^{*}]=[P_{f,n,s}^{d^*},N_{f,n,s}^{d^*}]$.
	
\end{itemize}

\end{Algorithm}

\subsection{Learning Algorithm for Approximated Value Function}\label{sub:learning}

In Section \ref{sub:app}, the values of $\widetilde{V}_{N_R-n+1}(\widetilde{S}_{f}^*)$ and $\widetilde{V}_{N_R-n+1}(\widetilde{S}_{f}^{i,s})$ are evaluated analytically by assuming that the distribution of the requesting users $ \mathcal{F} $ is known. However in practice, this distribution may be unknown to the BS. In order to address this issue, we propose the following learning-based algorithm to evaluate the value functions  $\widetilde{V}_{N_R-n+1}(\widetilde{S}_{f}^*)$ and $\widetilde{V}_{N_R-n+1}(\widetilde{S}_{f}^{i,s})$ from the historical request arrivals.

\begin{Algorithm}[Reinforcement Learning for Value Functions]\label{alg:learning}
\ \ 
	\begin{itemize}
		\item {\bf Step 1}: Let $ t=0 $. Initialize the value of $\widetilde{V}_{N_R-n+1}(\widetilde{S}_{f}^*)$ and $\widetilde{V}_{N_R-n+1}(\widetilde{S}_{f}^{i,s})$ ($ \forall n,i,s $), and denote them as $\widetilde{V}^t_{N_R-n+1}(\widetilde{S}_{f}^*)$ and $\widetilde{V}^t_{N_R-n+1}(\widetilde{S}_{f}^{i,s})$. This initialization can be done by assuming all the users appear uniformly in the cell coverage, hence the approach in Section \ref{sub:app} can be applied to calculate the initial values.
		
		\item {\bf Step 2}: Let $ t=t+1 $ if there is file request arrival.  Suppose it is the  $ m $-th request on the $ g $-th file, and the location of the requesting user is $ \mathbf{l}_{m,g} $, we have $ \forall i,f,s,n $
		\begin{align*}
		\widetilde{V}_{N_R-n+1}^t(\widetilde{S}_{f}^*) =&\frac{t}{t+1}\widetilde{V}_{N_R-n+1}^{t-1}(\widetilde{S}_{f}^*)\\
		 &+\frac{R_f^I}{(t+1)R_g^I}\bigg\{(N_R\!-\!n\!+\!1)I\!(\mathbf{l}_{g,m} \! \!\notin \mathcal{C})\\
		&\times\sum_s\! (P_{g,m,s}^*N_{g,m,s}^*  + w N_{g,m,s}^*)\bigg\},\nonumber
		\end{align*}
		where $P_{g,m,s}^* =\min \{\frac{w} { \mathbb{W}(\frac{2^{\theta_t}w}{e}) }, P_B\},
		N_{g,m,s}^* =\frac{R_f^I}{\alpha[\theta_t+\log_2(P_{g,m,s}^*)]} $, $
		\theta_t =	\!\mathbb{E}_{\mathbf{h}_{g,m,s}}\!\! \left[ \log_2 \left(  \frac{||\mathbf{h}_{g,m,s}||^2}{N_T (\sigma^2_z+I_{g,m})}\right)\! \right] $.
		\begin{align*}
		\!\!\!\!\widetilde{V}_{N_R\!-\!n\!+\!1}^t\!(\!\widetilde{S}_{f}^{i,s}\!)\!\!=&\frac{t}{t+1}\widetilde{V}_{N_R-n+1}^{t-1}(\widetilde{S}_{f}^{i,s})\\
		&\!\!\!\!\!+\frac{R_f^I}{(\!t\!+\!1\!)\!R_g^I}\bigg\{ 	\overline{G}_n^1(\widetilde{S}_{g}^{i,s}) I(R_{g,m,s} \leq R_{g,m,s}^i)\\
		 &\!\!\!\!\!\!+ \!\min\{	\!\overline{G}_n^2\! (\!\widetilde{S}_{g}^{i,s}\!), 	\overline{G}_n^3\! (\!\widetilde{S}_{g}^{i,s}\!)\!\} I(R_{g,m,s}\!\!>\!\! R_{g,m,s}^i)\!\bigg\}, \nonumber
		\end{align*}
		where $	\overline{G}_n^1 $, $	\overline{G}_n^2$ and $	\overline{G}_n^3$ are defined in (\ref{eqn:qc}), 
		(\ref{eqn:a_u}) and (\ref{eqn:a_i}) respectively.
		
		\item {\bf{Step 3}}: If $\max\{|\widetilde{V}_{N_R-n+1}^t(\!\widetilde{S}_{f}^{i,s}\!)\!-\! \widetilde{V}_{N_R-n+1}^{t-1}(\!\widetilde{S}_{f}^{i,s}\!)|, |\widetilde{V}_{N_R-n+1}^t(\!\widetilde{S}_{f}^{*}\!)\!-\! \widetilde{V}_{N_R-n+1}^{t-1}(\!\widetilde{S}_{f}^{*}\!)| \big| \forall n,i,s\}  $ is greater than one threshold $ \tau $, the algorithm goes to Step 2; otherwise, the algorithm terminates.
	\end{itemize}
\end{Algorithm}

Moreover, we have the following conclusion on the convergence of above learning algorithm.

\begin{Lemma}\label{Lemma:learning}
	The Algorithm \ref{alg:learning} will converge to the true value of $\widetilde{V}_{N_R-n+1}(\widetilde{S}_{f}^*)$ and $\widetilde{V}_{N_R-n+1}(\widetilde{S}_{f}^{i,s})$ ($ \forall f,n,i,s $). Thus $$
	\lim\limits_{t \rightarrow +\infty } \widetilde{V}^t_{N_R-n+1}(\widetilde{S}_{f}^*) = \widetilde{V}_{N_R-n+1}(\widetilde{S}_{f}^*)$$, 	$$\lim\limits_{t \rightarrow +\infty } \widetilde{V}^t_{N_R-n+1}(\widetilde{S}_{f}^{i,s}) = \widetilde{V}_{N_R-n+1}(\widetilde{S}_{f}^{i,s}). $$

\end{Lemma}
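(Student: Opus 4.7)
The plan is to recognize that Algorithm~\ref{alg:learning} reduces to a running sample mean, so that its convergence follows from the strong law of large numbers once unbiasedness is verified. First I would unroll the recursion: setting $\alpha_t = 1/(t+1)$, the update $V^t = (1-\alpha_t) V^{t-1} + \alpha_t X_t$ telescopes into the arithmetic average $V^t = \frac{1}{t+1}\sum_{\tau=0}^{t} X_\tau$, where $X_\tau$ denotes the one-shot ``reward'' computed in Step~2 at the $\tau$-th request arrival. Hence the scheme is an unweighted Monte Carlo estimator rather than a Robbins--Monro iteration, and the only facts one has to check are that $\{X_\tau\}$ is i.i.d.\ and that $\mathbb{E}[X_\tau]$ equals the target value function.

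Next I would establish both facts. Independence across $\tau$ follows directly from the model in Section~\ref{sec:model}: the PPP governing arrivals generates fresh user locations from $\mathcal{F}$ and independent large-scale fadings $(\eta,\rho)$ at every request, so the observations are i.i.d. For the iterate targeting $\widetilde{V}_{N_R-n+1}(\widetilde{S}_f^*)$, unbiasedness amounts to verifying that $\mathbb{E}\bigl[\tfrac{R_f^I}{R_g^I}(N_R-n+1)\, I(\mathbf{l}_{g,m}\notin \mathcal{C})\sum_s (P^*_{g,m,s} N^*_{g,m,s} + w N^*_{g,m,s})\bigr]$ coincides with the closed-form expression of Lemma~\ref{Lemma:Segment}; this is a direct rewriting once the expectation over the user location and the channels is carried out. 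For the iterate targeting $\widetilde{V}_{N_R-n+1}(\widetilde{S}_f^{i,s})$, the piecewise summand in Step~2 is a pointwise realization of the two-branch decomposition used to \emph{define} $\widetilde{V}_{N_R-n+1}(\widetilde{S}_f^{i,s})$ in \eqref{eqn:v_default one fragment}, so the law of total expectation reproduces the value function after integrating over $(\eta,\rho)$.

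The key obstacle is justifying the cross-file rescaling factor $R_f^I/R_g^I$: because Step~2 pools observations generated by requests on \emph{every} file $g$ to update the estimators of file $f$, the per-sample cost must be linear in the information rate $R^I$. This linearity is inherited from the closed forms in Lemma~\ref{Lemma:Segment} and Lemma~\ref{Lemma:default_one_Segment}, where the optimal power $P^*$ is independent of $R^I$ while the optimal symbol count $N^*$ is proportional to $R^I$; the per-sample cost $P^*N^*+wN^*$ therefore scales linearly in $R^I$, and the rescaling is indeed unbiased. Once i.i.d.\ sampling and unbiasedness are in place---with finite expectation guaranteed by the peak-power constraint $P_B$ and the finiteness of $R^I$, which together bound $N^*$---I would invoke the strong law of large numbers to conclude $\widetilde{V}^t_{N_R-n+1}(\widetilde{S}_f^*) \to \widetilde{V}_{N_R-n+1}(\widetilde{S}_f^*)$ and $\widetilde{V}^t_{N_R-n+1}(\widetilde{S}_f^{i,s}) \to \widetilde{V}_{N_R-n+1}(\widetilde{S}_f^{i,s})$ almost surely, which is the statement of Lemma~\ref{Lemma:learning}.
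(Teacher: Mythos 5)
Your proof is correct and follows essentially the same route as the paper's Appendix D: unroll the $1/(t+1)$ recursion into a sample mean of i.i.d., unbiased one-shot observations and conclude by a law of large numbers (the paper shows the variance of the averaged error $\sum_{i=0}^{t}\varepsilon_i/(t+1)$ vanishes, whereas you invoke the SLLN for almost sure convergence). Your explicit verification of unbiasedness and of the cross-file rescaling $R_f^I/R_g^I$ via the linearity of $P^*N^*+wN^*$ in $R^I$ fills in steps the paper merely asserts as ``clear.''
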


\begin{proof}
	Please refer to Appendix D. 
\end{proof}
\begin{figure*}
		\begin{equation}\label{eqn:refined_upper_bound}
		\overline{V}_{N_R-n+1}(\widetilde{S}_{f,n}) = \!\!\! \min_{\Omega_{f,n}(\widetilde{S}_{f,n})} \!\!\! \mathbb{E}_{\eta,\rho}\! \bigg\{\! \sum_{s}\! g_{f,n,s}(\Omega_{f,n}) \!+ \!\!\! \sum\limits_{\widetilde{S}_{f,n+1}} \!\!{\widehat{V}_{N_R-n}(\widetilde{S}_{f,n+1})\Pr(\widetilde{S}_{f,n+1}|\widetilde{S}_{f,n},\Omega_{f,n})} \! \bigg\}.
		\end{equation}
		\hrulefill
\end{figure*}

\subsection{Bounds on Approximated Value function}

In this paper, two approximation steps are proposed to find the sub-optimal and low-complexity reactive multicast policy, i.e.,
\begin{itemize}
	\item {\bf $ W \rightarrow \widetilde{V} $: } Approximate the cost-to-go function $ W $ via a linear combination of value functions of a finite-horizon MDP in Section \ref{sec:approx-cost-to-go}.
	\item {\bf $ \widetilde{V} \rightarrow \widehat{V} $: } Analytically approximate the value function in Section \ref{sub:app}.
\end{itemize}
In this section, we shall provide an analytical upper bound on the approximation error of $ \widetilde{V} \rightarrow \widehat{V} $, and an analytical lower bound on the cost-to-go function $ W $ (Note that the upper bound of $ W $ can be obtained by numerical simulation). First of all, we have the following conclusion on the bounds of the true value functions $ \widetilde{V}  $. 
\begin{Lemma}[Bounds of Value Functions] \label{lem:bound} 
	The upper and lower bounds of  $ \widetilde{V}_{N_R-n+1}(\widetilde{S}_{f,n}) $ ($ \forall f,n $) are provided below.
\begin{equation}\label{eqn:upper}
\widetilde{V}_{N_R-n+1}(\widetilde{S}_{f,n}) \leq \widehat{V}_{N_R-n+1}(\widetilde{S}_{f,n})
\end{equation}
\begin{align}\label{eqn:lower}
\widetilde{V}_{N_R-n+1}(\widetilde{S}_{f,n}) \geq&  \widetilde{V}_{N_R-n+1}(\widetilde{S}_{f}^*) \nonumber \\
&+ \!\!\!\!\!\!\!\!\!\sum_{{\{(i,s)|\forall \mathbf{B}^i_{f,s}(\widetilde{S}_{f,n}) = 0\}}} \bigg( \widetilde{V}_{1}(\widetilde{S}_{f}^{i,s})-\widetilde{V}_{1}(\widetilde{S}_{f}^*)\bigg)
\end{align}	
\end{Lemma}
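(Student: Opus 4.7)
The plan is to establish the two bounds separately, using the disjointness of the cache service regions $\{\mathcal{C}_c\}$ and the additive structure of the per-segment cost $\sum_s g_{f,n,s}$ as the common thread.

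For the upper bound \eqref{eqn:upper}, I would argue by exhibiting a sub-optimal feasible policy. Let $\{(i_1,s_1),\ldots,(i_K,s_K)\}$ be the set of missing pairs in $\widetilde{S}_{f,n}$. I would construct $\tilde{\Omega}$ in which, upon each request, the transmission parameters $(P_{f,n,s_j},N_{f,n,s_j},\mathbf{c}_{f,n,s_j})$ for each missing pair are chosen according to the optimal action of the single-missing-segment problem with reference state $\widetilde{S}_f^{i_j,s_j}$. Because the regions $\mathcal{C}_i$ are pairwise disjoint, a requesting user lies in at most one cache region, so the $K$ sub-problems cost-contribute on mutually exclusive realizations of $\mathbf{l}_{f,n}$, while all other service needs (i.e., users in $\mathcal{C}_0$) are already covered by $\widetilde{V}_{N_R-n+1}(\widetilde{S}_f^*)$. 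The cache-state transition probabilities in \eqref{eqn:bellman-reduce} also decouple across distinct $(i,s)$ entries of $\widetilde{S}_{f,n}$. I would then use induction on $k=N_R-n+1$ (base case $k=0$, where both sides vanish) applied to the Bellman recursion, showing that the expected total cost under $\tilde{\Omega}$ is exactly $\widetilde{V}_k(\widetilde{S}_f^*) + \sum_j(\widetilde{V}_k(\widetilde{S}_f^{i_j,s_j})-\widetilde{V}_k(\widetilde{S}_f^*))=\widehat{V}_k(\widetilde{S}_{f,n})$. Since $\widetilde{V}_k(\widetilde{S}_{f,n})$ is the infimum over feasible policies, \eqref{eqn:upper} follows.

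For the lower bound \eqref{eqn:lower}, the plan is a two-step reduction. First, since the all-cached state $\widetilde{S}_f^*$ is absorbing (no action changes it, as nothing is left to cache), the Bellman equation collapses to $\widetilde{V}_k(\widetilde{S}_f^*)=k\,\widetilde{V}_1(\widetilde{S}_f^*)$. Second, by monotonicity $\widetilde{V}_{k-1}(\widetilde{S}_{f,n+1})\geq \widetilde{V}_{k-1}(\widetilde{S}_f^*)$ (more cached content cannot raise the value), substitution into \eqref{eqn:bellman-reduce} gives
\begin{equation}
\widetilde{V}_k(\widetilde{S}_{f,n}) \;\geq\; \min_{\Omega_{f,n}}\mathbb{E}_{\eta,\rho}\!\left[\sum_s g_{f,n,s}(\Omega_{f,n})\right]+\widetilde{V}_{k-1}(\widetilde{S}_f^*) \;=\; \widetilde{V}_1(\widetilde{S}_{f,n})+\widetilde{V}_{k-1}(\widetilde{S}_f^*), \nonumber
\end{equation}
which combined with the first step yields $\widetilde{V}_k(\widetilde{S}_{f,n})-\widetilde{V}_k(\widetilde{S}_f^*)\geq \widetilde{V}_1(\widetilde{S}_{f,n})-\widetilde{V}_1(\widetilde{S}_f^*)$. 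It then remains to verify the single-stage additive identity $\widetilde{V}_1(\widetilde{S}_{f,n})-\widetilde{V}_1(\widetilde{S}_f^*)=\sum_{(i,s)}\bigl(\widetilde{V}_1(\widetilde{S}_f^{i,s})-\widetilde{V}_1(\widetilde{S}_f^*)\bigr)$. With only one remaining stage, there is no future-cost coupling, so the optimal policy is myopic per segment; the indicator $I(\mathbf{l}_{f,n}\notin\mathcal{C}_{f,n}^s)$ activates segment $s$'s transmission cost precisely when the user lies in some $\mathcal{C}_i$ with $\mathcal{B}^i_{f,s}=0$, and disjointness of the regions makes these contributions additive in expectation over $\mathcal{F}$.

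The principal obstacle I expect is rigorously executing the decomposition bookkeeping for the upper bound: one must verify that the constructed policy $\tilde\Omega$ satisfies constraints \eqref{constraint:decoding}–\eqref{constraint:power_constraint} simultaneously for all $K$ sub-problems (which follows from the per-segment choice of $(P_{f,n,s},N_{f,n,s})$ but needs to be stated), and that the aggregation does not double-count the baseline $\widetilde{V}_k(\widetilde{S}_f^*)$. The disjointness of $\{\mathcal{C}_i\}$ is the structural fact that makes both the upper-bound decomposition and the single-stage identity for the lower bound clean; without it, cross terms would appear and the sub-additivity/equality would degrade into strict inequalities with residual coupling.
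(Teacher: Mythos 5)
Your lower-bound argument is sound and essentially equivalent to the paper's: the paper also combines (i) the fact that the current-stage cost under the optimal policy is at least the myopic one-stage minimum $\widetilde{V}_1(\widetilde{S}_{f,n})$, (ii) the monotonicity $\mathbb{E}[\widetilde{V}_{N_R-n}(\widetilde{S}_{f,n+1})]\geq \widetilde{V}_{N_R-n}(\widetilde{S}_f^*)$, and (iii) the exact single-stage additive identity, which indeed follows from the disjointness of the $\mathcal{C}_i$ and the structure of the indicator in $g_{f,n,s}$. Your packaging via the absorbing-state identity $\widetilde{V}_k(\widetilde{S}_f^*)=k\widetilde{V}_1(\widetilde{S}_f^*)$ is if anything a little cleaner than the paper's stage-$N_R$ comparison.

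The upper bound, however, has a genuine gap exactly at the point you flagged as ``bookkeeping,'' and the bookkeeping is where the content lies. Your policy $\tilde\Omega$ prescribes, for each missing pair $(i_j,s_j)$, the parameters $(P_{f,n,s_j},N_{f,n,s_j})$ of the single-missing-node subproblem $\widetilde{S}_f^{i_j,s_j}$. But two missing pairs $(i,s)$ and $(i',s)$ with $i\neq i'$ share the \emph{same} segment $s$, and a request triggers a \emph{single} multicast of that segment: only one $(P_{f,n,s},N_{f,n,s})$ can be chosen, and the two subproblems' optimal actions generally differ (different $\rho_i\eta^i_{f,n,s}$). Consequently the action — and hence the transition kernel for the bits $\mathcal{B}^i_{f,s}$ and $\mathcal{B}^{i'}_{f,s}$ — does not decouple across pairs, and your claim that $\tilde\Omega$ achieves \emph{exactly} $\widehat{V}_k(\widetilde{S}_{f,n})$ is false in general; $\widehat{V}$ is only an upper bound. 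Disjointness of the regions resolves which subproblem's \emph{cost} is charged at the current stage, but it does not resolve which action to take when the user lies in $\mathcal{C}_0$ (where every subproblem may want to opportunistically fill its cache node), nor does it make the future-cost accounting separable. The paper closes this hole by inducting on the number $c$ of cache nodes still missing a given segment: for state $\widetilde{T}_f^{l+1,s}$ it builds a location-branching policy that uses the $\widetilde{T}_f^{l,s}$-optimal action when the user is in $\mathcal{C}_1\cup\dots\cup\mathcal{C}_l$, the $\widetilde{S}_f^{l+1,s}$-optimal action when the user is in $\mathcal{C}_{l+1}$, and — crucially — the \emph{larger} of the two resource expenditures when the user is in $\mathcal{C}_0$, which guarantees both targeted decodings while keeping the over-count within $\widetilde{V}(\widetilde{T}_f^{l,s})+\widetilde{V}(\widetilde{S}_f^{l+1,s})-\widetilde{V}(\widetilde{S}_f^*)$. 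Your induction on the stage index $k$ does not substitute for this step; you would need to add the per-segment induction (or an equivalent max-action device) to make the upper bound go through.
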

\begin{proof}
	Please refer to Appendix E.
\end{proof}

Let $ \mathcal{E}_{N_R-n+1}(\widetilde{S}_{f,n}) \triangleq \widehat{V}_{N_R-n+1}(\widetilde{S}_{f,n}) - \widetilde{V}_{N_R-n+1}(\widetilde{S}_{f,n}) $, ($ \forall n,  \widetilde{S}_{f,n}$) be the approximation error of the value functions for arbitrary $ n $ and $ \widetilde{S}_{f,n} $. Replacing $ \widetilde{V}_{N_R-n+1}(\widetilde{S}_{f,n}) $ by (\ref{eqn:lower}), we have
\begin{align*} 
\mathcal{E}_{N_R-n+1}(\!\widetilde{S}_{f,n}\!)\! \leq&  \!\!\!\!\!\!\!\!\!\!\!\!\!\!\!\!\!\sum_{\{(i,s)|\forall \mathcal{B}^i_{f,s}(\widetilde{S}_{f,n}) = 0\}} \!\! 
\bigg\{\!\widetilde{V}_{N_R-n+1}(\!\widetilde{S}_{f}^{i,s}\!)\!-\!\widetilde{V}_{N_R-n+1}(\!\widetilde{S}_{f}^*\!)\\	&-\widetilde{V}_{1}(\widetilde{S}_{f}^{i,s})+\widetilde{V}_{1}(\widetilde{S}_{f}^*)\bigg\}.
\end{align*}
Moreover, a tighter upper bound of the value functions can be obtained numerically.
\begin{Corollary}[Refined Upper Bound of Value Function]\label{Corollary:refined_upper_bound}
	Let $ \overline{V}_{N_R-n+1}(\widetilde{S}_{f,n}) $ ($ \forall f,n $) be the intermediate value function on system state $ \widetilde{S}_{f,n} $ after one-step value iteration based on $ \widehat{V}_{N_R-n}(\widetilde{S}_{f,n+1}) $, is given by \eqref{eqn:refined_upper_bound}.
	Then, $\widetilde{V}_{N_R-n+1}(\widetilde{S}_{f,n}) \leq \overline{V}_{N_R-n+1}(\widetilde{S}_{f,n}) \leq \widehat{V}_{N_R-n+1}(\widetilde{S}_{f,n})$.
\end{Corollary}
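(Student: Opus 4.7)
The plan is to prove the two inequalities separately, each by invoking Lemma \ref{lem:bound}.

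For the left inequality $\widetilde{V}_{N_R-n+1}(\widetilde{S}_{f,n}) \leq \overline{V}_{N_R-n+1}(\widetilde{S}_{f,n})$, I would appeal to the monotonicity of the Bellman operator. The recursion (\ref{eqn:bellman-reduce}) satisfied by $\widetilde{V}_{N_R-n+1}$ and the definition (\ref{eqn:refined_upper_bound}) of $\overline{V}_{N_R-n+1}$ differ only in that the next-stage continuation uses $\widetilde{V}_{N_R-n}$ in the former and $\widehat{V}_{N_R-n}$ in the latter; the immediate cost term $\sum_s g_{f,n,s}$, the feasible set given by (\ref{constraint:decoding})--(\ref{constraint:power_constraint}), and the transition kernel are identical. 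Since the transition probabilities are non-negative and enter both minimizations in the same affine way, the pointwise inequality $\widetilde{V}_{N_R-n}(\widetilde{S}_{f,n+1}) \leq \widehat{V}_{N_R-n}(\widetilde{S}_{f,n+1})$ from (\ref{eqn:upper}) lifts, by substitution inside the minimand and then taking $\min$ on both sides, to $\widetilde{V}_{N_R-n+1}(\widetilde{S}_{f,n}) \leq \overline{V}_{N_R-n+1}(\widetilde{S}_{f,n})$.

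For the right inequality $\overline{V}_{N_R-n+1}(\widetilde{S}_{f,n}) \leq \widehat{V}_{N_R-n+1}(\widetilde{S}_{f,n})$, I would reuse the suboptimal policy $\Omega^{\mathrm{sub}}_{f,n}$ that Appendix E constructs to establish the upper half of Lemma \ref{lem:bound}. That construction handles each missing cache-segment pair independently, serving each defect $(i,s)$ as if the state were $\widetilde{S}_f^{i,s}$, and already yields the per-stage key bound
\begin{equation*}
\mathbb{E}_{\eta,\rho}\Bigl\{ \sum_s g_{f,n,s}(\Omega^{\mathrm{sub}}_{f,n}) + \sum_{\widetilde{S}_{f,n+1}} \widehat{V}_{N_R-n}(\widetilde{S}_{f,n+1})\Pr(\widetilde{S}_{f,n+1}|\widetilde{S}_{f,n},\Omega^{\mathrm{sub}}_{f,n}) \Bigr\} \leq \widehat{V}_{N_R-n+1}(\widetilde{S}_{f,n}),
\end{equation*}
which is exactly the induction step underpinning the upper half of Lemma \ref{lem:bound}. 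Since $\Omega^{\mathrm{sub}}_{f,n}$ is a feasible choice in the minimization in (\ref{eqn:refined_upper_bound}), the left-hand side is an upper bound on $\overline{V}_{N_R-n+1}(\widetilde{S}_{f,n})$, and the claim follows.

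The main obstacle is the per-stage bound just invoked: one must verify that $\Omega^{\mathrm{sub}}_{f,n}$ remains feasible under (\ref{constraint:decoding})--(\ref{constraint:power_constraint}) when all missing (cache, segment) pairs are served simultaneously, and that the additivity implicit in the linear form (\ref{eqn:app-v}) of $\widehat{V}$ correctly dominates the joint continuation cost. Since this coupling argument is already the crux of Lemma \ref{lem:bound}'s proof, it can be reused verbatim. Combining the two inequalities sandwiches $\overline{V}_{N_R-n+1}(\widetilde{S}_{f,n})$ between $\widetilde{V}_{N_R-n+1}(\widetilde{S}_{f,n})$ and $\widehat{V}_{N_R-n+1}(\widetilde{S}_{f,n})$ as claimed.
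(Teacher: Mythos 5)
Your overall structure coincides with the paper's Appendix F. The left inequality is argued identically: the backup defining $\overline{V}_{N_R-n+1}$ in (\ref{eqn:refined_upper_bound}) and the Bellman recursion (\ref{eqn:bellman-reduce}) for $\widetilde{V}_{N_R-n+1}$ share the same stage cost, feasible set and transition kernel, so the pointwise bound $\widetilde{V}_{N_R-n}\leq\widehat{V}_{N_R-n}$ from (\ref{eqn:upper}) lifts through the non-negative transition probabilities and the outer minimization. The right inequality is likewise obtained, as in the paper, by exhibiting a feasible policy in the minimization of (\ref{eqn:refined_upper_bound}) whose evaluated cost is dominated by $\widehat{V}_{N_R-n+1}$.

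The one place your argument has a real gap is the claim that the required per-stage bound ``is exactly the induction step underpinning the upper half of Lemma~\ref{lem:bound}.'' It is not: Appendix E's induction compares the true value functions $\widetilde{V}$ of states differing in one defect, with the continuation evaluated along trajectories of suboptimal policies (i.e., against $\widetilde{V}$-type continuations), whereas here you need a single Bellman backup whose continuation is the linear surrogate $\widehat{V}_{N_R-n}$. The paper closes this by constructing a \emph{new} explicit policy $\widehat{\Omega}_{f,n}$ in Appendix F: for a user inside $\mathcal{C}_c$ it serves each segment as if all other caches had already decoded it, and for a user outside all cache coverage it sets $P^0_{f,n,s}=P^*_{f,n,s}+\sum_c(P^c_{f,n,s}-P^*_{f,n,s})$ and $N^0_{f,n,s}=N^*_{f,n,s}+\sum_c(N^c_{f,n,s}-N^*_{f,n,s})$, so that feasibility of (\ref{constraint:decoding})--(\ref{constraint:power_constraint}) and additivity against the linear form (\ref{eqn:app-v}) hold by construction; it then checks that this policy's first-stage cost matches the first-stage cost implicit in $\widehat{V}_{N_R-n+1}$ while its $\widehat{V}_{N_R-n}$-continuation is no larger. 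You correctly identify this coupling as the crux, but ``reused verbatim'' oversells it---the additive policy and the two-step comparison through the intermediate quantity $\check{V}_{N_R-n+1}$ still need to be written out rather than imported from Appendix E.
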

\begin{proof}
	Please refer to Appendix F.
\end{proof}
With the knowledge of $ \mathcal{F} $, $ \overline{V}_{N_R-n+1}(\widetilde{S}_{f,n}) $ can be calculated via Monte Carlo simulation. If $ \mathcal{F} $ is not available at the BS, the learning-based approach can be used to evaluate $ \overline{V}_{N_R-n+1}(\widetilde{S}_{f,n}) $. The algorithm is similar to the one in Section \ref{sub:learning}, and it is omitted here due to page limitation. Hence, it is feasible to obtain better approximation of value function $ \widetilde{V}_{N_R-n+1}(\widetilde{S}_{f,n}) $ for any specified stage and per-file system state.

Note that the upper bound of the cost-to-go function $ W(S_{f,n},T_{f,n}^f) $ can be obtained by simulating the file transmission with initial system state $ S_{f,n} $ and lifetime $ T_{f,n}^f $. We introduce the following analytical lower bound.

\begin{Lemma}[Analytical Lower-bound on $ W $]
	Given the initial system state $ S_{f,n+1} $ at the beginning of a remaining lifetime with duration $ T_{f,n}^f $, the minimum transmission cost of the BS, denoted as $ W(S_{f,n+1},T_{f,n}^f) $ is lower-bounded as
\begin{align*}
&W(S_{f,n+1},T_{f,n}^f) \\\geq& \sum_{N_R}\bigg\{\frac{(\lambda_f T_{f,n}^f)^{N_R}}{N_R!} e^{-\lambda_f T_{f,n}^f} \\ 
&\times \bigg[ \widetilde{V}_{N_R}(\widetilde{S}_{f}^*)+\!\!\!\!\!\!\!\!\!\!\!\!\!\!\!\!\!\!
 \sum_{\{(i,s)|\forall \mathbf{B}^i_{f,s}(\widetilde{S}_{f,n+1}) = 0\}} \bigg( \widetilde{V}_{1}(\widetilde{S}_{f}^{i,s})-\widetilde{V}_{1}(\widetilde{S}_{f}^*)\bigg) \bigg]\bigg\}.  
\end{align*}
\end{Lemma}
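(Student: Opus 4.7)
The plan is to combine, in order, the conditional cost-to-go bound of Lemma \ref{lem:cost_lower_bound} with the state-wise value-function bound \eqref{eqn:lower} of Lemma \ref{lem:bound}. Concretely, I will first establish that the random-horizon cost-to-go is dominated by a Poisson mixture of finite-horizon value functions,
\[
W(S_{f,n+1},T_{f,n}^f) \;\geq\; \sum_{N_R=0}^{\infty}\frac{(\lambda_f T_{f,n}^f)^{N_R}}{N_R!}\,e^{-\lambda_f T_{f,n}^f}\,\widetilde{V}_{N_R}(\widetilde{S}_{f,n+1}),
\]
and then substitute the lower bound \eqref{eqn:lower} for each $\widetilde{V}_{N_R}(\widetilde{S}_{f,n+1})$.

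The first step mirrors the dominance argument behind Lemma \ref{lem:cost_lower_bound}: condition on $N_R$, the Poisson-distributed number of requests that arrive in the interval of duration $T_{f,n}^f$. For each realization of $N_R$, the minimum expected transmission cost of the resulting $N_R$-stage problem initialized at the (deterministic) cache state $\widetilde{S}_{f,n+1}$ equals $\widetilde{V}_{N_R}(\widetilde{S}_{f,n+1})$ by Lemma \ref{Lemma:reduce_space}. Because the non-anticipative policy implicit in $W$ must commit to actions without knowing $N_R$ in advance, it cannot outperform the $N_R$-aware MDP-optimal policy pathwise; averaging this pathwise inequality over the Poisson law of $N_R$ yields the displayed bound.

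In the second step, I apply Lemma \ref{lem:bound} termwise. The correction pieces $\widetilde{V}_1(\widetilde{S}_f^{i,s})-\widetilde{V}_1(\widetilde{S}_f^*)$ on the right-hand side of \eqref{eqn:lower} depend only on the stage-1 value function and are therefore independent of $N_R$; they pass through the Poisson sum as a single additive constant multiplied by $\sum_{N_R}\Pr(N_R)=1$. Only $\widetilde{V}_{N_R}(\widetilde{S}_f^*)$ remains weighted by the Poisson kernel, and reassembling the two pieces reproduces exactly the right-hand side in the lemma.

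The hard part will be making the first reduction rigorous. The subtlety is reconciling the channel-averaging convention of Lemma \ref{Lemma:reduce_space} (where $\widetilde{V}$ arises from taking $\mathbb{E}_{\eta,\rho}$ over $V$) with the pointwise nature of $W(S_{f,n+1},\cdot)$. The cleanest route is to first establish $W(S_{f,n+1},T_{f,n}^f) \geq \sum_{N_R}\Pr(N_R)\,V_{N_R}(S_{f,n+1})$ in the full state space via a hindsight-dominates-foresight swap on the policy space, and then invoke Lemma \ref{Lemma:reduce_space} to collapse onto the reduced cache-state representation. Once that technical step is handled, the remaining manipulations are purely algebraic.
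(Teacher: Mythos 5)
Your proposal is correct and follows essentially the same route as the paper, whose own proof is simply the combination of Lemma \ref{lem:cost_lower_bound} (the Poisson mixture of finite-horizon value functions lower-bounding the cost-to-go) with the lower bound \eqref{eqn:lower} of Lemma \ref{lem:bound}, using the fact that the $\widetilde{V}_1$ correction terms are independent of $N_R$ and factor out of the Poisson sum. Your extra care about passing from the full state $S_{f,n+1}$ to the reduced cache state via Lemma \ref{Lemma:reduce_space} is a reasonable tightening of a step the paper leaves implicit, but it does not change the argument.
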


\begin{proof}
	This lemma is straightforward by combining the conclusions of Lemma \ref{lem:cost_lower_bound} and \ref{lem:bound}.
\end{proof}

\section{Scheduling Algorithm for Proactive Multicast}\label{sec:proactive}

In this section, we propose a heuristic scheduling algorithm of proactive multicast, which could deliver some file segments to the cache nodes with low transmission cost by exploiting the temporal diversity of shadowing effect. We first define the proactive file placement policy.

 \begin{Definition}[Proactive Multicast Policy]
	Suppose that the BS will proactively multicast one file segment in every $ T_p $ seconds. In the $ k $-th proactive transmission opportunity, given the state of each cache node $ \{ \mathcal{B}_{f,s}^c|\forall c, f, s \} $, the shadowing from the BS to each cache nodes $ \{\eta^c_k | \forall c \} $, and the remaining lifetime of each file $ \{ T^k_f| \forall f \} $, the BS should determine the selected file segment $ (f_k, s_k) $ and the downlink transmission parameters $ P_{k} $ and $ N_{k} $ for the selected $ (f_k, s_k) $-th file segment. Thus denote $ S^k =\left[\{ \mathcal{B}_{f,s}^c|\forall c, f, s \},\{\eta^c_k | \forall c \}, \{ T^k_f| \forall f \} \right]$, the proactive multicast policy can be written as $\Omega_{k}( S^k)= [f_k, s_k, P_{k}, N_{k}]. $
\end{Definition}

The joint optimization of reactive multicast $ \{ \Omega_{f,n}|\forall f,n \} $ and proactive multicast $ \{ \Omega_k |\forall k \} $ is complicated as the transmission strategy of different files are coupled. Instead, we use the low-complexity scheduling policy for reactive multicast derived in the previous section, and propose a heuristic proactive multicast to further suppress the overall transmission cost. Specifically, we use $
\widehat{g}_f^k(\widetilde{S}_{f}^k, T_{f}^k) = \sum\limits_{N}\frac{(\lambda_f T_{f}^k)^N}{N!} e^{-\lambda_f T_{f}^k}{\widehat{V}_{N}(\widetilde{S}_{f}^k)} $
to approximate the remaining transmission cost spent on the $ f $-th file without any proactive multicast, where $ \widetilde{S}_{f}^k = \{ \mathcal{B}_{f,s}^c|\forall c, s \} $ is the cache state of the $ f $-th file before the $ k $-th proactive multicast. Hence, $ \Omega_{k}( S^k) $ can be determined as follows.

\begin{Problem}[Heuristic Scheduling for Proactive Multicast] \label{prob:heur}
\begin{eqnarray}
 & \max\limits_{P_k, N_k, f_k, s_k}& \!\!\!\!\!	\widehat{g}_{f_k}^k(\widetilde{S}_{{f_k}}^k, T_{{f_k}}^k) /  \left[ P_k N_k + w  N_k + \widehat{g}_{f_k}^k(\breve{S}_{{f_k}}^k, T_{{f_k}}^k)\right] \nonumber\\
&s.t.& \!\!\!\!\!\!\!\! \widehat{g}_{f_k}^k(\!\widetilde{S}_{{f_k}}^k, T_{{f_k}}^k
\!) /  \left[\! P_k N_k \!+ \!w  N_k \!+\! \widehat{g}_{f_k}^k(\!\breve{S}_{{f_k}}^k, T_{{f_k}}^k\!)\!\right] \!\geq\! \tau^{'}, \nonumber
\end{eqnarray}
where $ \tau^{'} >1 $ is a constant threshold, $ \widetilde{S}_{{f_k}}^k $ and $ \breve{S}_{{f_k}}^k $ are  the system cache state before and after proactive multicast respectively.
\end{Problem}

In the objective of Problem \ref{prob:heur}, the numerator and the denominator are the approximations of the $ f_k $-th file's remaining transmission cost with and without the $ k $-th proactive multicast, respectively. The constraint with $ \tau^{'} >1  $ is due to approximation error. Problem \ref{prob:heur} can be solved via the following algorithm.

\begin{Algorithm}[Proactive Multicast] On each proactive transmission opportunity (say the $ k $-th opportunity), the algorithm to determine the proactive multicast policy $ \Omega_k $ is elaborated below.
	
\begin{itemize}
\item Step 1: For each file segment (say the $ (f,s) $-th one), evaluate 
\begin{eqnarray}
\!\Delta g_{f,s}^k\!\!\!\!\!\!\! &=&\!\!\!\!\!\! \!\!\!\!\!\!\max\limits_{P_{f,s}^k, N_{f,s}^k, \mathcal{A}_f^k}	\frac{\widehat{g}_f^k(\widetilde{S}_{f}^k, T_{f}^k)} {  P_{f,s}^k N_{f,s}^k + w  N_{f,s}^k + \widehat{g}_f^k(\breve{S}_{f}^k(\mathcal{A}_f^k), T_{f}^k)} \nonumber\\
&s.t.& R_k^c \geq R_f^I, \ \forall c \in \mathcal{A}_f^k, \nonumber
\end{eqnarray}
where $ P_{f,s}^k, N_{f,s}^k$ and $ \mathcal{A}_f^k $ represents the transmission power, transmission symbol number and the set of receiving cache nodes, $ \breve{S}_{f}^k(\mathcal{A}_f^k) $ denotes the cache  state where the cache nodes in $ \mathcal{A}_f^k $ have successfully decoded the $ (f,s) $-th segment given the previous state $ \widetilde{S}_{f}^k $. The solution of the above optimization problem can be obtained by minimizing the denominator, which is similar to that of Problem \ref{prob:online-s}. Hence it is omitted here.

\item Step 2: The $ (f_k,s_k) $-th segment is chosen when the following two conditions are satisfied: 
\begin{itemize}
	\item $ (f_k,s_k)  = \arg\max\limits_{(f,s)} \Delta g_{f,s}^k $ and $ \Delta g_{f_k,s_k}^k \geq \tau^{'} $.
\end{itemize}
\end{itemize}
\end{Algorithm}

\section{Simulation}\label{sec:sim}

\begin{figure} 
	\centering 
	\subfigure[$ N_C = 20 $]{ \label{fig:20caches}  \includegraphics[width=3in]{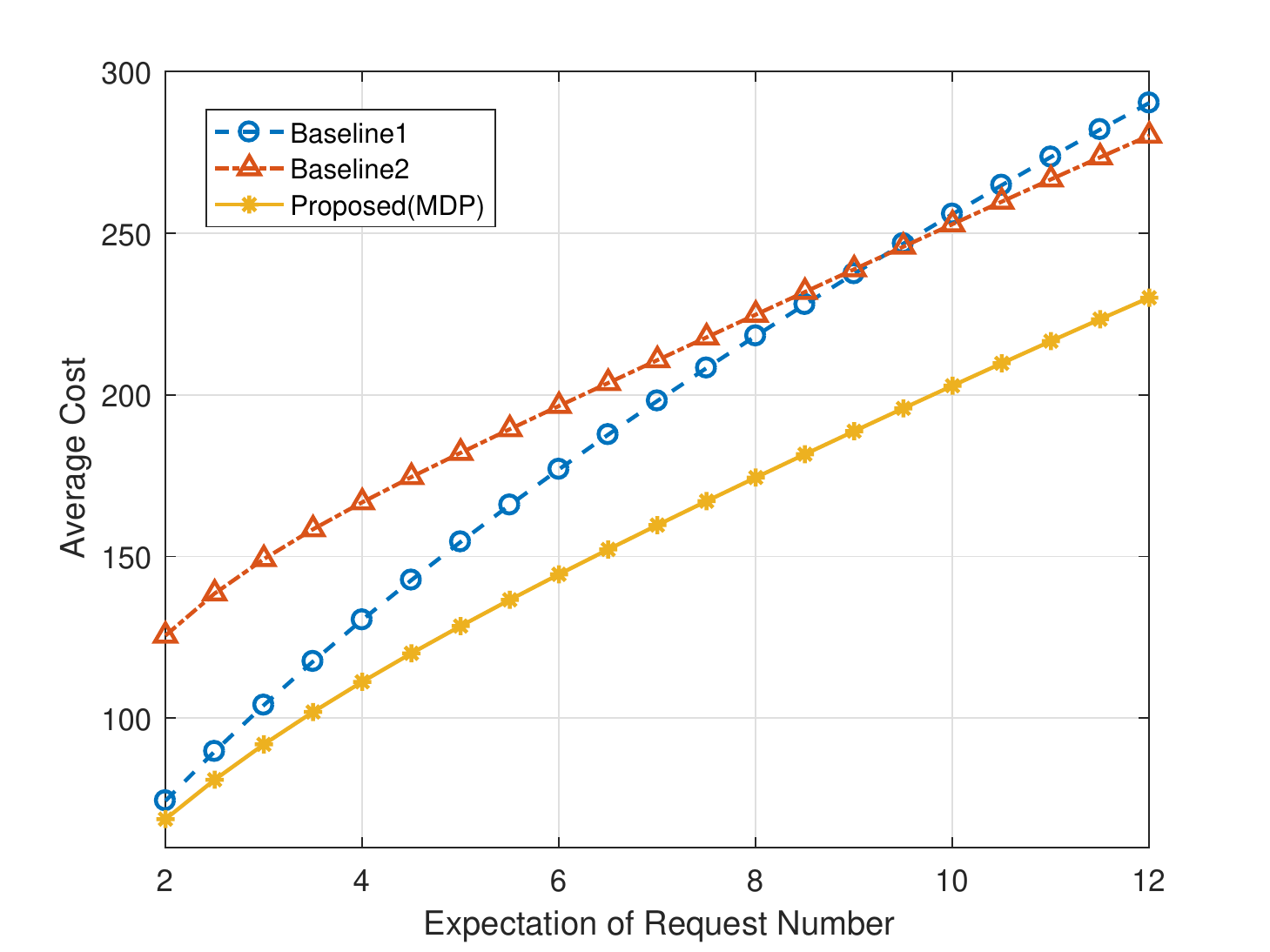}} 
	\hspace{0.2in} 
	\subfigure[$ N_C = 25 $]{ \label{fig:25caches} 
	\includegraphics[width=3in]{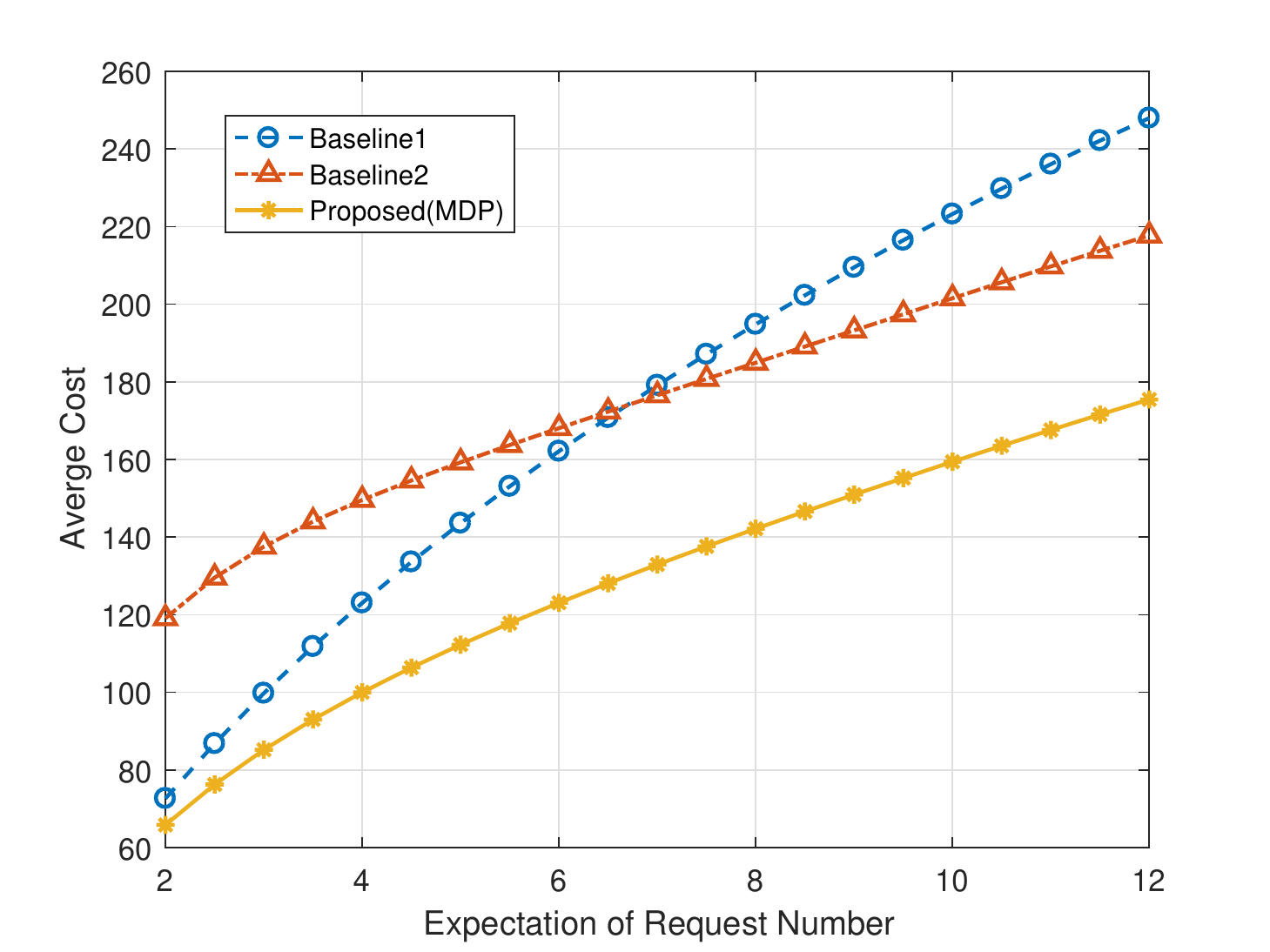}} 
	\caption{The average total cost versus the average times of request in file lifetime, where the number of cache nodes is $ 20 $ and $  25 $ respectively.} 
	\label{fig:caches} 
\end{figure}

In the simulation, the cell radius $ 500 $ meters, cache nodes are randomly deployed in the cell-edge region with a service radius of $ 90 $ meters. The number of BS antennas is 8. The path loss exponent is $3.5$. The file segment size $ R_f^I= 14 $Mb ($ \forall f $). The transmission bandwidth is $ 20 $MHz. The power constraint at the base station $P_B=46$ dBm. The performance of the proposed algorithm will be compared with the following two baselines. 
\begin{Baseline}
	The BS only makes sure that the segment delivery to the requesting users in each transmission. The cache nodes with better channel condition to the BS can decode the segments.
\end{Baseline}
\begin{Baseline}The BS ensures that all the cache nodes can decode the downlink file in the first transmission. Hence, all the cache nodes can help to forward the file since the second file request.
\end{Baseline}

The performance of the proposed low-complexity algorithm (Algorithm \ref{alg:AMDP}) is compared with the above two baselines in Fig.\ref{fig:caches}. In the simulation, the number of cache nodes is $ 20 $ and $25$ respectively, and the requesting users are uniformly distributed in the cell coverage, and distribution statistics are known to the BS. Hence, the analytical expressions derived in Section \ref{sub:app} can be used to calculate the approximated value functions. It can be observed that the proposed Algorithm \ref{alg:AMDP} is superior to the two baselines for any expected number of requests per file lifetime. Moreover, the Baseline 1 has better performance than Baseline 2 when the popularity of the file is high (larger expected number of file requests). The performance gain tends to be a constant when expectation of request number is large. This is because all the three schemes have the same performance as long as the files have been stored in all cache nodes. In other words, the gain of the proposed scheme lies in the phase of caching. 

\begin{figure} 
	\centering 
	\subfigure[$\widetilde{S} = \{\mathcal{B}^i_{f,s} = 0, \forall f,s, i\} $]{ \label{fig:bound_full}  
		\includegraphics[width=3in]{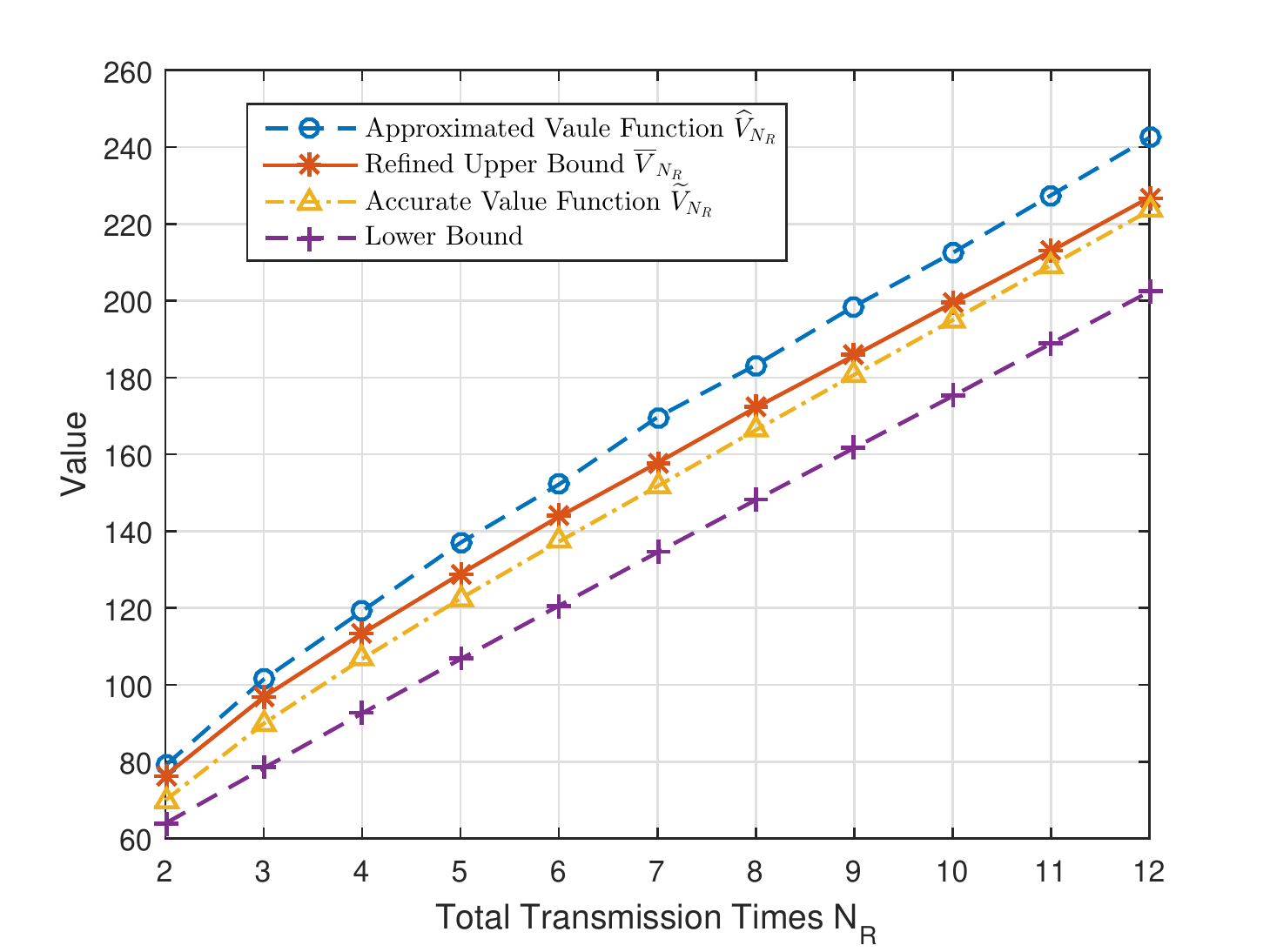}} 
	\hspace{0.2in} 
	\subfigure[$\widetilde{S} = \{\mathcal{B}^i_{f,s} = 0, \forall f,s, i=1,2,...,10\} \cup \{\mathcal{B}^i_{f,s} = 1, \forall f,s, i=11,12,...,20\}$]{ \label{fig:bound_half}
		\includegraphics[width=3in]{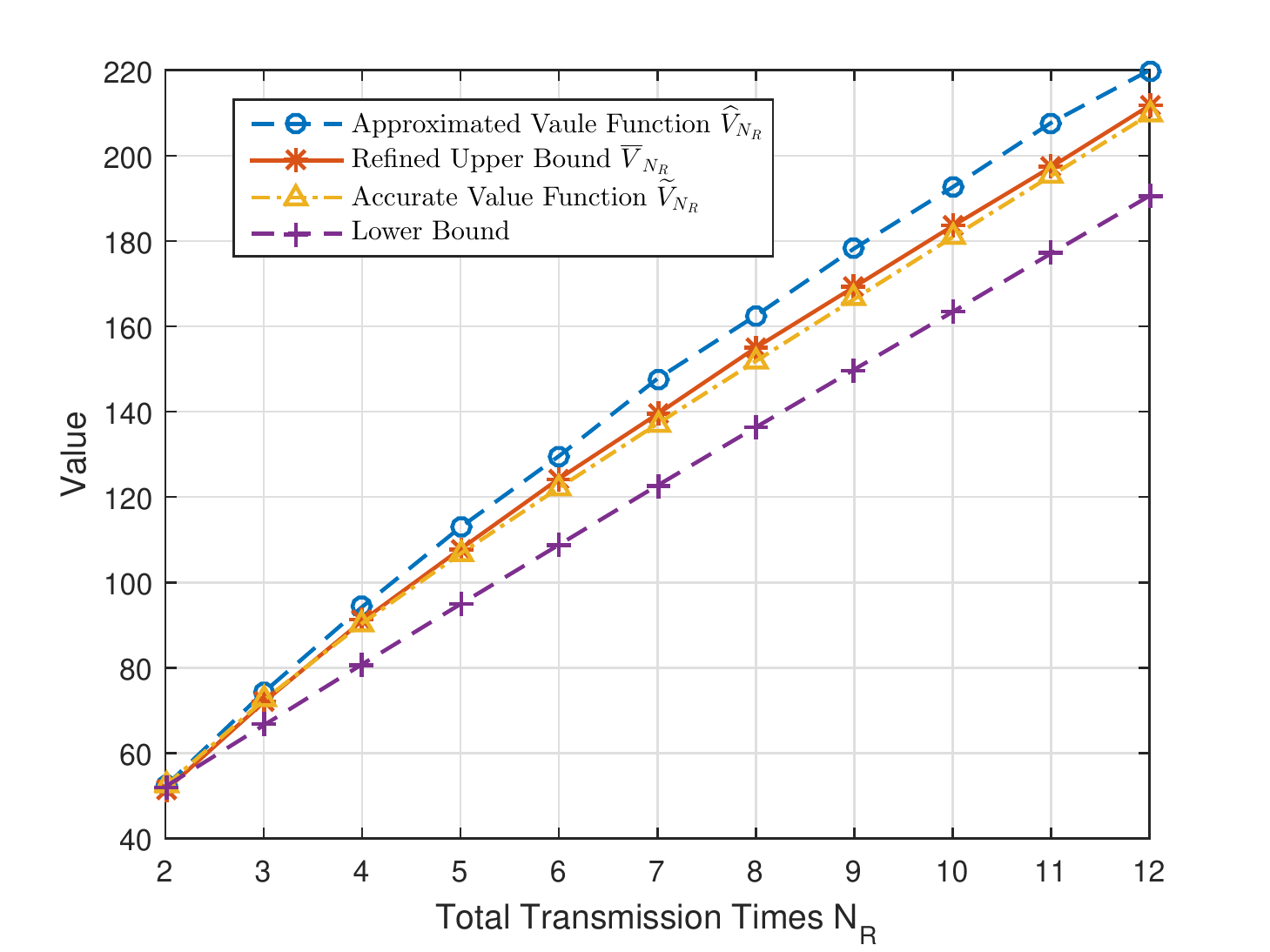}} 
	\caption{Illustration of value function and its bounds, where $ N_C = 20 $.} 
	\label{fig:bound} 
\end{figure}

The approximation error of the value function versus the indexes of file requests is illustrated in Fig.\ref{fig:bound}, where the true value function and the bounds derived in Lemma \ref{lem:bound} are plotted.
%Fig. \ref{fig:bound_full} shows the state where all the cache nodes are empty, and Fig. \ref{fig:bound_half} shows the state where half of cache nodes have decoded the whole file. 
The cache nodes are empty in Fig. \ref{fig:bound_full}, while half of cache nodes have decoded the whole file in Fig. \ref{fig:bound_half}. It is shown that for both states, both upper and lower bounds are tight, and therefore the approximation error is small. In addition, the refined upper bound has even smaller gap to the true value function, which matches the conclusion in Corollary \ref{Corollary:refined_upper_bound}.

\begin{figure} 
	\centering 
	\subfigure[$ 3 $ hot zones]{ \label{fig:learning(3)}  
		\includegraphics[width=3in]{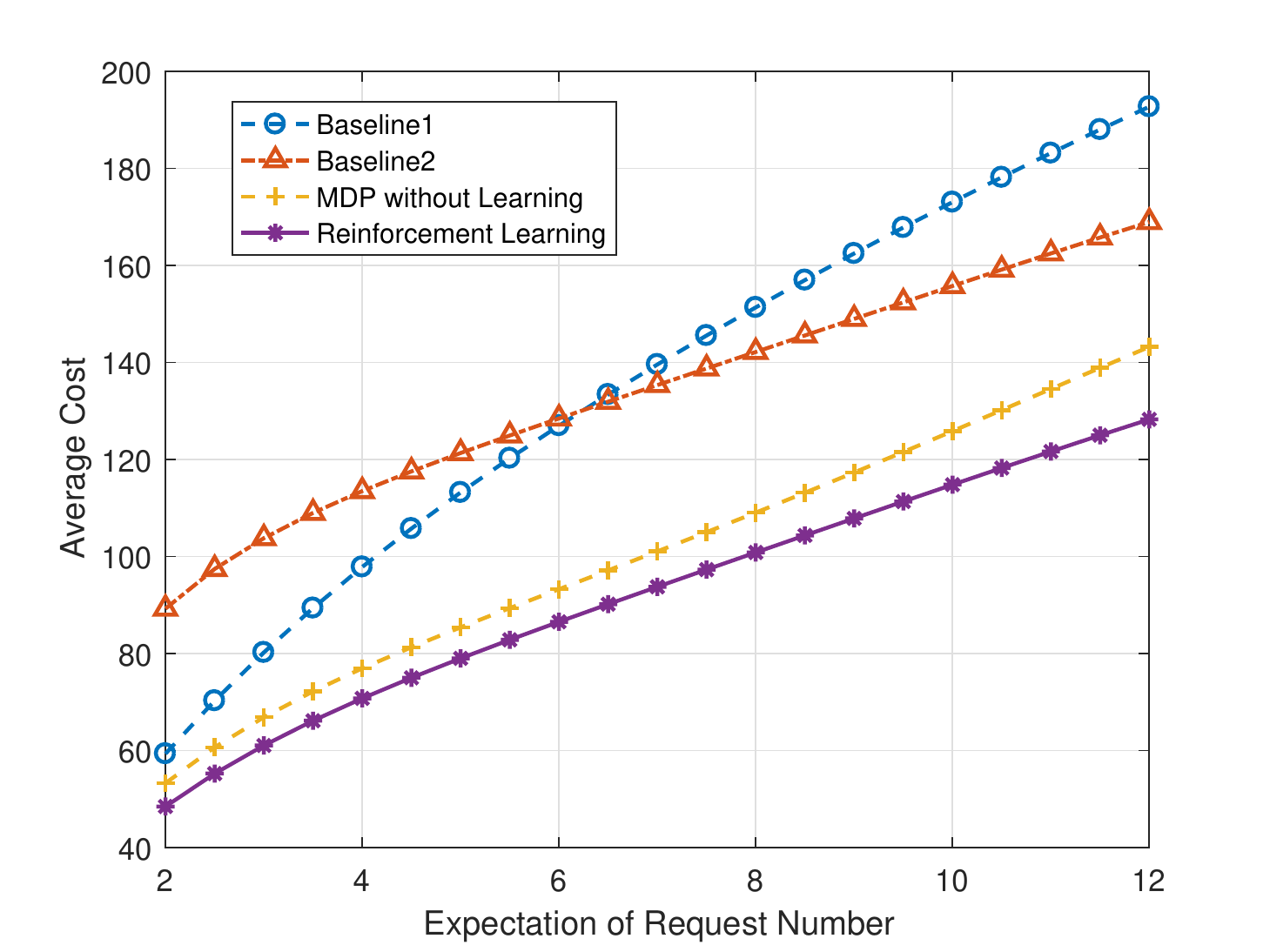}}
	\hspace{0.2in} 
	\subfigure[$ 4 $ hot zones]{ \label{fig:learning(4)}
		\includegraphics[width=3in]{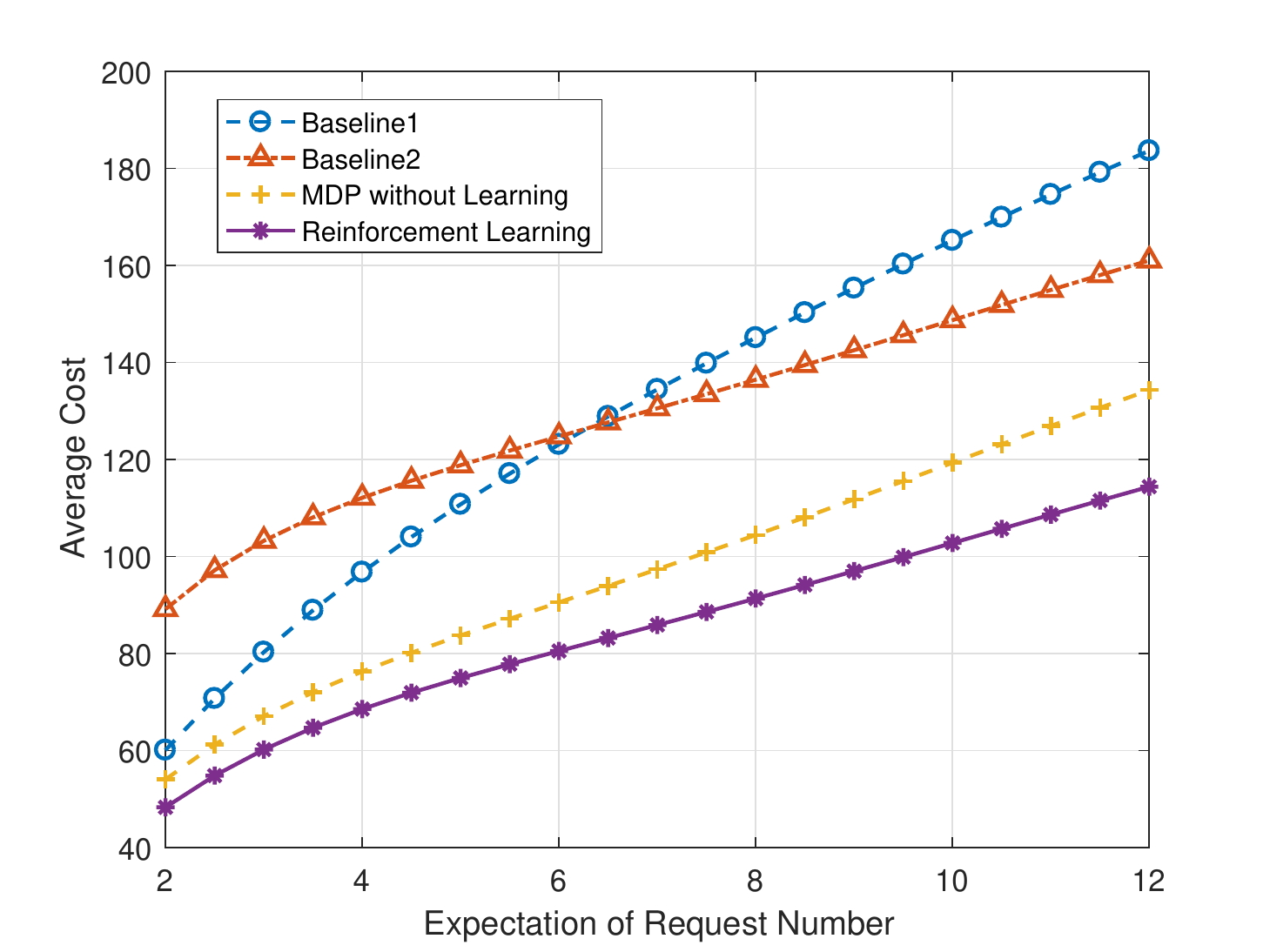}} 
	\caption{The average total cost versus the expectation of request times, where exits 3 hot zones in the cell.} 
	\label{fig:learn} 
\end{figure}

In Fig. \ref{fig:learn}, there are $ 3 $ and $ 4 $ hot zones in the cell coverage respectively, each with radius $ 90 $ m. The probability that the user appears in the one hot zone is $12.5\%$ (larger than the other regions). The locations and user distribution of the hot zones are unknown to the BS. The performance of two baselines, the proposed Algorithm \ref{alg:AMDP} assuming users are uniformly distributed, and the proposed Algorithm \ref{alg:AMDP} with learning-based evaluation of value functions (Algorithm \ref{alg:learning}) are compared. It can be observed that the proposed  learning algorithm has the best performance. Moreover, the performance gain of the learning-based algorithm is larger with more hot zones.

\begin{figure}
	\centering
		\subfigure[]{ \label{fig:proactive(a)}  
	\includegraphics[width=3in]{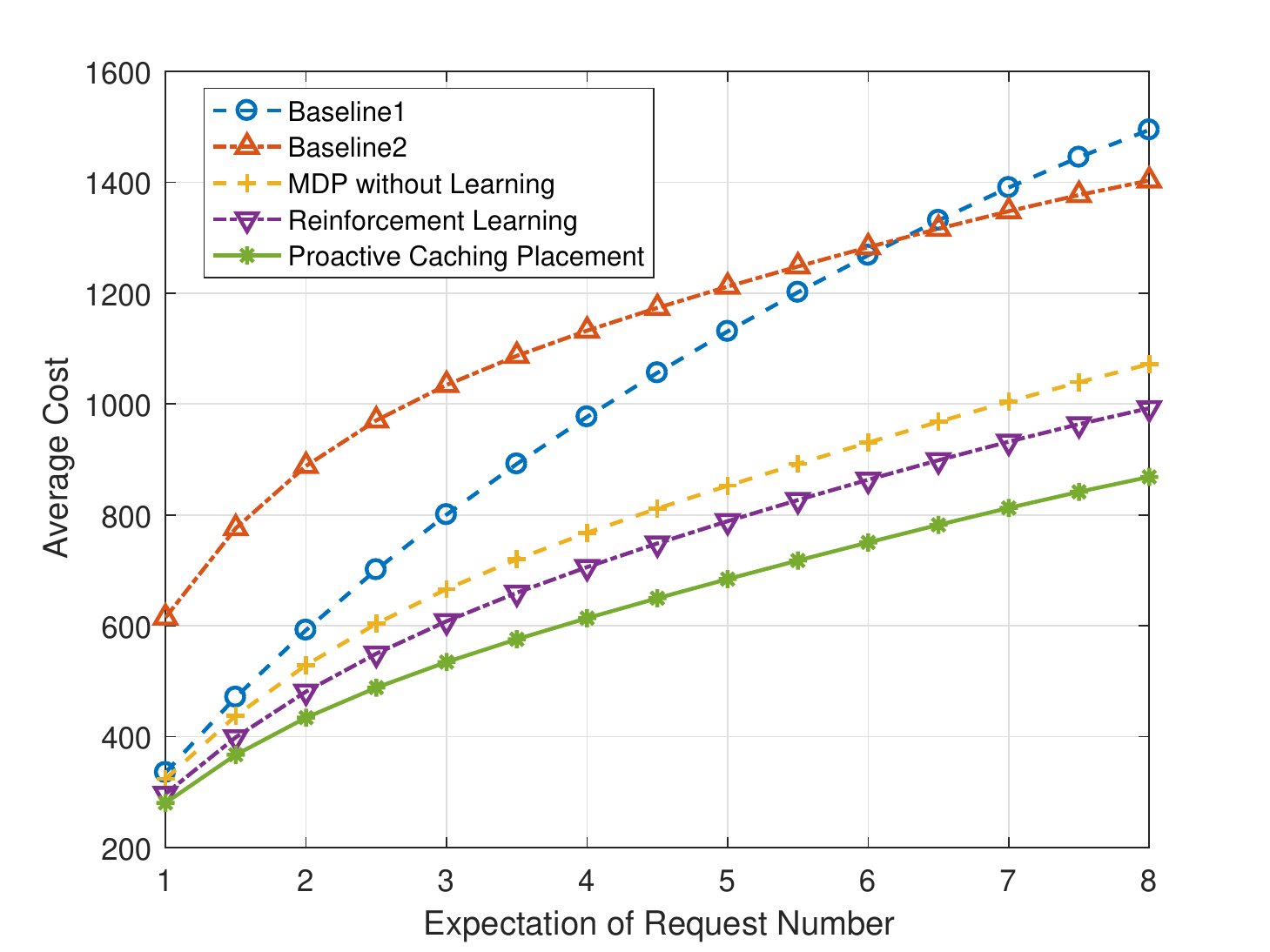}}
\hspace{0.2in} 
\subfigure[]{ \label{fig:proactive(b)}
	\includegraphics[width=3in]{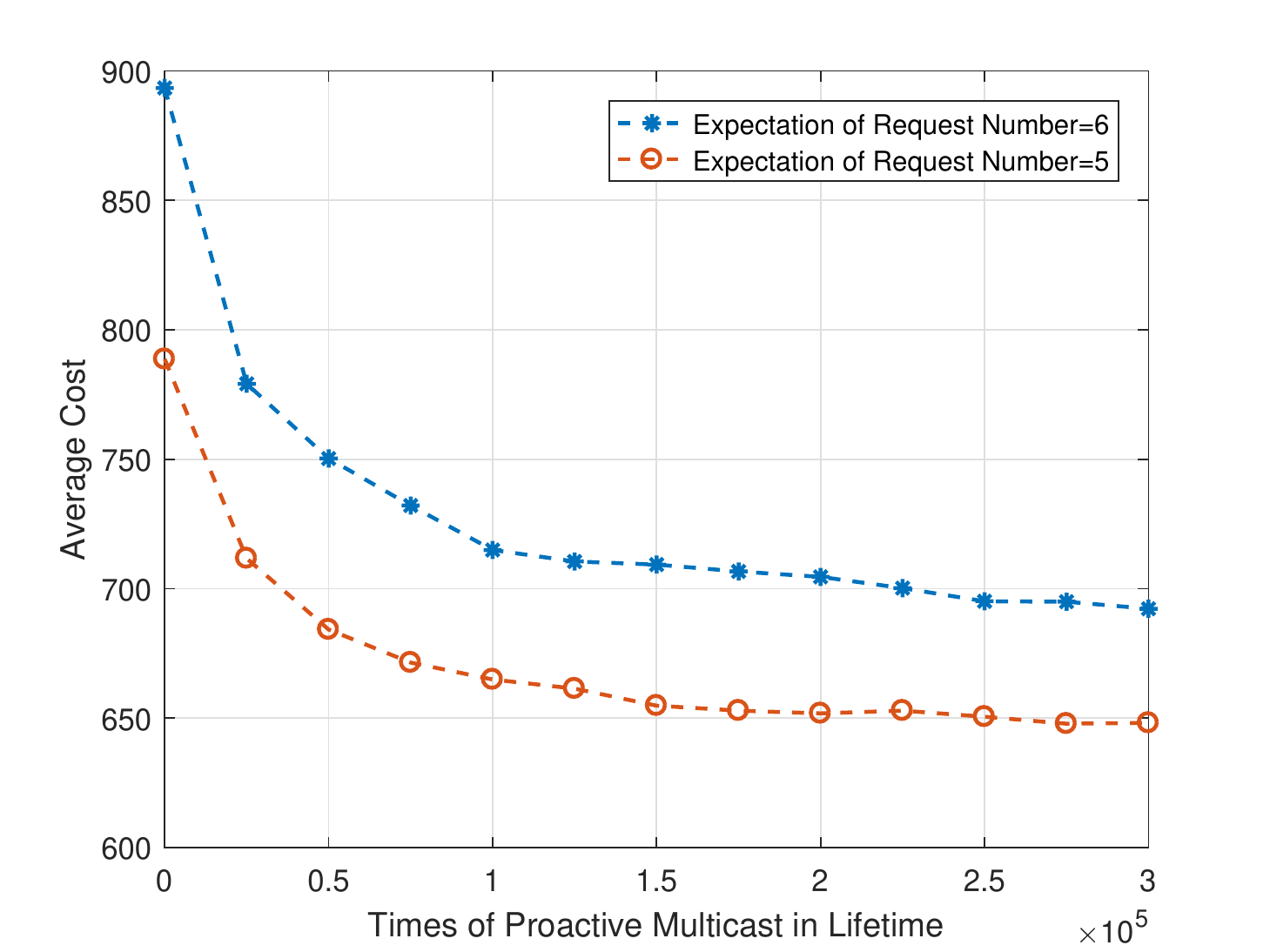}} 
	\caption{The performance of proactive multicast. In (a), the average system transmission cost versus the average number of requests per lifetime is illustrated. In (b), the average system transmission cost is illustrated for different times of proactive multicast.}
	\label{fig:proactive}
\end{figure}

Finally, the performance gain of the proactive multicast is demonstrated in Fig. \ref{fig:proactive} (a), where there are 10 files, 3 hot zones in the cell, 50000 times of proactive transmission opportunities in the file's lifetime. The performance of the proactive content placement algorithm is compared with the above two baselines, Algorithm \ref{alg:AMDP} assuming uniform user distribution, and Algorithm \ref{alg:AMDP} with learning-based evaluation of value functions. It can be observed that the proposed proactive content placement algorithm can further improve the offloading performance, especially when the popularity of files is high. Moreover, it is shown in Fig. \ref{fig:proactive} (b) that the average cost decreases with the increasing of proactive multicast frequency. However, the transmission cost reduction saturates when the proactive multicast frequency are large. 
	
\section{Conclusion}\label{sec:con}
We consider the scheduling of downlink file transmission with the assistance of cache nodes in this paper. The downlink resource minimization problem with reactive multicast is formulated as a dynamic programming problem with random number of stages. We first approximate it by a finite-horizon MDP with fixed stage numbers. In order to address the curse of dimensionality, we also introduce a low-complexity sub-optimal solution based on linear approximation of value functions. The approximated value function can be calculated analytically with the knowledge of distribution statistics of users. Since, the statistics of the distribution may be unknown to the BS, we continue to propose a learning-based online algorithm to evaluate the approximated value functions. Furthermore, we derive a bound on the gap between the approximated value functions and the real ones. Finally, we propose a proactive multicast algorithm, which can exploit the channel temporal diversity of shadowing effect. 

\section*{Appendix A: Proof Of Lemma \ref{Lemma:reduce_space}}
Let $\widetilde{V}_{N_R-n}(\widetilde{S}_{f,n})=\mathbb{E}_{\eta,\rho}[V_{N_R-n}(S_{f,n+1})] $, where the expectation is taken over the randomness of shadowing and requesting users' pathloss, we have
\begin{align*}
&\sum\limits_{S_{f,n+1}}V_{N_R-n}(S_{f,n+1})\Pr(S_{f,n+1}|S_{f,n},\Omega_{f,n})\\
=&\sum\limits_{\widetilde{S}_{f,n+1}}{\widetilde{V}_{N_R-n}(\widetilde{S}_{f,n+1})\Pr(\widetilde{S}_{f,n+1}|{S}_{f,n},\Omega_f)}
\end{align*}
Taking expectation with respect to the shadowing and pathloss on (\ref{eqn:bellman-fix}), we have
\begin{align*}
\widetilde{V}_{N_R-n+1}(\widetilde{S}_{f,n}) =& \mathbb{E}_{\eta,\rho} \bigg\{ \min_{\Omega(S_{f,n})} \sum_{s} g_{f,n,s}(S_{f,n},\Omega_{f,n})\\ &\!\!\!+\!\!\!\! \!\sum\limits_{S_{f,n+1}}\!\!\!\!{V_{N_R-n}(\!S_{f,n+1}\!)\!\Pr(\!S_{f,n+1}|S_{f,n},\Omega_{f,n}\!)} \! \bigg \} \\
=&\min_{\Omega(\widetilde{S}_{f,n})} \mathbb{E}_{\eta,\rho} \bigg\{\sum_{s} g_{f,n,s}(S_{f,n},\Omega_{f,n})\\
&\!\!\!+ \!\!\!\! \!\sum\limits_{\widetilde{S}_{f,n+1}}\!\!\!\!{\widetilde{V}_{N_R-n}(\!\widetilde{S}_{f,n+1}\!)\Pr(\widetilde{S}_{f,n+1}|{S}_{f,n},\Omega_f)}\! \bigg\} .
\end{align*}

\section*{Appendix B: Proof Of Lemma \ref{lem:cost_lower_bound}}
Due to page limitation, we only provide the sketch of the proof. 
\begin{align*}
& \mathbb{E}_{S_{f,n+1}}[W(S_{f,n+1},T_{f,n}^f)|S_{f,n}] \nonumber\\ =&\mathbb{E}_{S_{f,n+1}} \bigg \{\!\min\limits_{\{\Omega_{f,k}|\forall k=n+1,...\}} \sum_N \mathbb{E}_{\eta,\rho,{\mathcal{T}}} \bigg[ \frac{(\lambda_f T_{f,n}^f)^N}{N!} e^{-\lambda_f T_{f,n}^f}\\
 &\times \sum_{n=1}^{N} \sum_{s=1}^{N_f} g_{f,n,s}(\Omega_{f,n}) \bigg|S_{f,n+1}\bigg] \bigg \}\\
\overset{(a)}{\geq}& \mathbb{E}_{S_{f,n+1}} \bigg \{ \sum_N \frac{(\lambda_f T_{f,n}^f)^N}{N!} e^{-\lambda_f T_{f,n}^f}\\
&\times\underbrace{\min\limits_{\{\Pi_{f,k}^N|\forall k\}}\mathbb{E}_{\eta,\rho,{ \mathcal{T}}} \bigg[  \sum_{n=1}^{N} \sum_{s=1}^{N_f} g_{f,n,s}(\Omega_{f,n}) \bigg|S_{f,n+1}\bigg] }_{\widetilde{V}_{N}(\widetilde{S}_{f,n+1})}\bigg \} \\
=& \!\!\!\!\sum\limits_{N,\widetilde{S}_{f,n+1}} \!\! \!\!\!\!\!\frac{(\lambda_f T_{f,n}^f)^N}{N!} e^{-\lambda_f T_{f,n}^f}{\widetilde{V}_{N}(\!\widetilde{S}_{f,n+1}\!)\Pr(\!\widetilde{S}_{f,n+1}|{S}_{f,n},\Omega_{f,n}\!)}, 
\end{align*}
where $\{\Pi_{f,k}^N|\forall k\}$ is the optimal policy when the remaining stage number is $ N $. The inequality (a) is because that $\{\Pi_{f,k}^N\}$ is optimized for each specific remaining stage number.

\section*{Appendix C: Proof Of Lemma \ref{Lemma:Segment} }
First of all, we have the following high SINR  approximation on the throughput $ R_{f,n,s} $. $R_{f,n,s} \approx N_{f,n,s} \mathbb{E}_{\mathbf{h}_{f,n,s}} \left[\alpha \log_2 \left(  \frac{||\mathbf{h}_{f,n,s}||^2 P_{f,n,s}}{N_T \sigma^2_z} \right) \right]=N_{f,n,s}\alpha[\theta+\log_2(P_{f,n,s})]$.
%\begin{eqnarray}
%R_{f,n,s} \approx N_{f,n,s} \mathbb{E}_{\mathbf{h}_{f,n,s}} \left[\alpha \log_2 \left(  \frac{||\mathbf{h}_{f,n,s}||^2 P_{f,n,s}}{N_T \sigma^2_z} \right) \right]=N_{f,n,s}\alpha[\theta+\log_2(P_{f,n,s})]\nonumber. 
%\end{eqnarray}
With $R_{f,n,s}= R_f^I$, we  have $
N_{f,n,s}=\frac{R_f^I}{\alpha[\theta+\log_2(P_{f,n,s})]}$.
Hence the original optimization becomes
$
\min \limits_{P_{f,n,s} }  \frac{R_f^I(P_{f,n,s}+w)}{\alpha[\theta+\log_2(P_{f,n,s})]}. \nonumber
$
The optimal transmission power $ P_{f,n,s}^* $ can be obtained by taking first-order derivative.

\section*{Appendix D: Proof Of Lemma \ref{Lemma:learning} }
We only prove the convergence of $ \widetilde{V}^t_{N_R-n+1}(\widetilde{S}_{f}^*) $, and the convergence of $ \widetilde{V}^t_{N_R-n+1}(\widetilde{S}_{f}^{i,s}) $ can be applied similarly.
Let $\varepsilon_t = \widetilde{V}_{N_R-n+1}(\widetilde{S}_{f}^*) - \frac{(N_R-n+1)R_f^I}{R_g^I}I(\mathbf{l}_{g,m} \notin \mathcal{C})
\sum_s ( P_{g,m,s}^*N_{g,m,s}^* + w N_{g,m,s}^*)$ be the estimate error in $t$-th iteration. It is clear that the estimation errors are i.i.d. with respect to $ t $,  $\mathbb{E} [\varepsilon_t] =0$ and $Var [\varepsilon_t ] < +\infty$. Note that  $\widetilde{V}^t_{N_R-n+1}(\widetilde{S}_{f}^*)$ can be written as
\begin{eqnarray}
\widetilde{V}^t_{N_R-n+1}(\widetilde{S}_{f}^*)&=&\sum_{i=0}^{t} \frac{\widetilde{V}_{N_R-n+1}(\widetilde{S}_{f}^*)-\varepsilon_i}{t+1} \nonumber \\
&=&\widetilde{V}_{N_R-n+1}(\widetilde{S}_{f}^*)-\sum_{i=0}^{t}\frac{\varepsilon_i}{t+1} ,\nonumber
\end{eqnarray}
where the total estimate error is $\sum_{i=0}^{t}\frac{\varepsilon_i}{t+1}$. The mean and variance of total estimate error are $
\mathbb{E}\bigg\{     \sum_{i=0}^{t}\frac{\varepsilon_i}{t+1}                 \bigg\}=0,   Var\bigg\{     \sum_{i=0}^{t}\frac{\varepsilon_i}{t+1}\bigg\} = \frac{Var[\varepsilon_i ]}{t+1}. $
When $t\rightarrow +\infty $,  the variance of estimation error tends to zero, and $ \widetilde{V}^t_{N_R-n+1}(\widetilde{S}_{f}^*) $ converges to $ \widetilde{V}_{N_R-n+1}(\widetilde{S}_{f}^*) $.

\section*{Appendix E: Proof Of Lemma \ref{lem:bound} }
\subsubsection{Proof of Upper Bound}
The approach of mathematical induction will be used in the proof. Without loss of generality, we shall assume that  the upper bound holds when the first $ l $-th cache nodes have not decoded the $ (f,s) $-th segment, and prove that the upper bound also holds when the first $ (l+1) $-th cache nodes have not decoded the $ (f,s) $-th segment.
Define the system state $  \widetilde{T_f}^{c,s} = [\mathcal{B}^i_{f,j}=1,\forall j\neq s, \forall i] \cup[\mathcal{B}^i_{f,s}=0,\forall i=1,2,\cdots,c]\cup[\mathcal{B}^i_{f,s}=1,\forall i > c]$.

\begin{itemize}
	\item Step 1: When $c=1$, the upper bound holds as follows
	\begin{align*}
	\!\widetilde{V}_{N_R-n+1}(\widetilde{T_f}^{c,s}) =& \widetilde{V}_{N_R-n+1}(\widetilde{S}_{f}^*) \\ &\!\!\!+\!\!\bigg( \! \widetilde{V}_{N_R-n+1}(\widetilde{S}_{f}^{c,s})-\widetilde{V}_{N_R-n+1}(\widetilde{S}_{f}^*)\!\bigg) 
	\end{align*}
	
	\item Step 2: Suppose the following bound holds for $ c=l $
		\begin{align*}\label{eqn:Assumption1}
		\!\widetilde{V}_{N_R-n+1}(\!\widetilde{T_f}^{l,s}) \!\leq & \widetilde{V}_{N_R-n+1}(\widetilde{S}_{f}^*)\\&\!\!\!\!+ \!\!\!\!\!\!\!\!\!\sum_{j=1,2,\cdots,l} \!\!\!\bigg(\!\! \widetilde{V}_{N_R-n+1}(\!\widetilde{S}_{f}^{c,j})\!-\!\widetilde{V}_{N_R-n+1}(\!\widetilde{S}_{f}^*\!)\!\!\bigg) 
		\end{align*}
	
	\item Step 3: When $c=l+1$, we can apply the following sub-optimal control policy: (1) if the requesting users appear in the coverage of $ \mathcal{C}_1 \cup \mathcal{C}_2 \cup ... \cup \mathcal{C}_l $, the optimal scheduling policy for system state $ \widetilde{T_f}^{l,s} $ is applied; (2) if the requesting users appear in the coverage of $ \mathcal{C}_{l+1} $, the optimal scheduling policy for system state $ \widetilde{S}_f^{l+1, s} $ is applied; (3) if the requesting users appear outside the coverage of any cache nodes, choose the one from the above two policies with larger transmission resource consumption. Let $\breve{V}_{N_R-n+1}(\widetilde{T_f}^{l+1,s})  $ be the average cost of the above sub-optimal scheduling policy, we have
	\begin{align*}
	\!\widetilde{V}_{N_R-n+1}(\!\widetilde{T_f}^{l+1,s})\!\! \leq& \breve{V}_{N_R-n+1}(\widetilde{T_f}^{l+1,s}) \\\leq& \widetilde{V}_{N_R-n+1}(\!\widetilde{T_f}^{l,s})\\ &\!\!\!\!+\!\! \bigg(\!\! \widetilde{V}_{N_R-n+1}(\widetilde{S}_{f}^{c,l+1})\!-\!\widetilde{V}_{N_R-n+1}(\!\widetilde{S}_{f}^*\!)\!\!\bigg).
	\end{align*}

\end{itemize}

Although the above proof is for the $ (f,s) $-th file segment, it can be trivially extended to arbitrary file segments. Thus the upper bound is proved.

\subsubsection{Proof of Lower Bound}
Let $\Omega_{f,n}^*$ be the optimal scheduling policy and $\widetilde{S}_{f,n}$ be arbitrary cache state for the $f$-th file in the $n$-th stage, we have 
\begin{align*}
&\widetilde{V}_{N_R-n+1}(\widetilde{S}_{f,n})-\widetilde{V}_{N_R-n+1}(\widetilde{S}_{f}^*) \\
= &\mathbb{E}_{\eta,\rho} \bigg\{ \sum_{s} g_{f,n,s}(\widetilde{S}_{f,n},\Omega_{f,n}^*)\bigg\}\\ 
&+\mathbb{E}_{\eta,\rho} \bigg\{\sum\limits_{\widetilde{S}_{f,n+1}}{\widetilde{V}_{N_R-n}(\widetilde{S}_{f,n+1})\Pr(\widetilde{S}_{f,n+1}|{S}_{f,n},\Omega_{f,n}^*)}\bigg\}  \\
& -  \mathbb{E}_{\eta,\rho} \bigg\{ \sum_{s} g_{f,n,s}(\widetilde{S}_{f}^*,\Omega_{f,n}^*)\bigg\}-\widetilde{V}_{N_R-n}(\widetilde{S}_{f}^*) \nonumber.
\end{align*}
As $$
\mathbb{E}_{\eta,\rho} \bigg\{\!\! \sum\limits_{s} g_{f,n,s}(\!\widetilde{S}_{f,n},\Omega_{f,n}^*\!)\!\bigg\} \!\geq\! \mathbb{E}_{\eta,\rho} \bigg\{\!\! \sum\limits_{s} g_{f,N_R,s}(\!\widetilde{S}_{f,n},\Omega_{f,N_R}^*\!)\!\bigg\}$$  and $$\mathbb{E}_{\eta,\rho} \bigg\{\!\! \sum\limits_{s} g_{f,n,s}(\widetilde{S}_{f}^*,\Omega_{f,n}^*)\!\bigg\} \!\!=\! \mathbb{E}_{\eta,\rho} \bigg\{\! \sum\limits_{s} g_{f,N_R,s}(\widetilde{S}_{f}^*,\Omega_{f,N_R}^*)\!\bigg\},$$
We have
\begin{align*}
&\mathbb{E}_{\eta,\rho} \bigg\{ \sum_{s} g_{f,n,s}(\widetilde{S}_{f,n},\Omega_{f,n}^*)\bigg\} -\mathbb{E}_{\eta,\rho} \bigg\{ \sum_{s} g_{f,n,s}(\widetilde{S}_{f}^*,\Omega_{f}^*)\bigg\} \\
\geq&\mathbb{E}_{\eta,\rho}\! \bigg\{\! \!\!\sum_{s}\! g_{f,N_R,s}(\!\widetilde{S}_{f,n},\!\Omega_{f,N_R}^*\!)\!\!\bigg\} \!-\! \mathbb{E}_{\eta,\rho}\! \bigg\{ \!\!\!\sum_{s}\! g_{f,N_R,s}(\!\widetilde{S}_{f}^*,\!\Omega_{f,N_R}^*\!)\!\!\bigg\}\\
=&\!\!\!\!\sum_{{\{(i,s)|\forall \mathbf{B}^i_{f,s}(\widetilde{S}_{f,n}) = 0\}}}\!\!\!\! \bigg( \widetilde{V}_{1}(\widetilde{S}_{f}^{i,s})-\widetilde{V}_{1}(\widetilde{S}_{f}^*)\bigg).\nonumber
\end{align*}
We also have 
\begin{align*}
\!\mathbb{E}_{\eta,\rho} \bigg\{\sum\limits_{\widetilde{S}_{f,n+1}}{\widetilde{V}_{N_R-n}(\widetilde{S}_{f,n+1})\Pr(\widetilde{S}_{f,n+1}|{S}_{f,n},\Omega_{f,n}^*)}\bigg\}&\\ -\widetilde{V}_{N_R-n}(\widetilde{S}_{f}^*)
&\geq 0.
\end{align*}
  As a result, the lower bound is straightforward.

\section*{Appendix F: Proof Of Corollary \ref{Corollary:refined_upper_bound} }
Due to page limitation, we only provide the sketch of the proof. First, $  \overline{V}_{N_R-n+1} (\widetilde{S}_{f,n}) \geq \widetilde{V}_{N_R-n+1} (\widetilde{S}_{f,n})$ can be deduced from the following two factors:
\begin{itemize}
	\item It has been proved in Lemma \ref{lem:bound} that $\widehat{V}_{N_R-n}(\widetilde{S}_{f,n+1})\geq\widetilde{V}_{N_R-n}(\widetilde{S}_{f,n+1}),\forall \widetilde{S}_{f,n+1}$.
	
	\item $  \overline{V}_{N_R-n+1} (\widetilde{S}_{f,n}) $ is the minimization of the $(f,n)$-th file transmission cost and the future cost $ \widehat{V}_{N_R-n}(\widetilde{S}_{f,n+1}) $; whereas $  \widetilde{V}_{N_R-n+1} (\widetilde{S}_{f,n}) $ is the minimization of the $(f,n)$-th file transmission cost and the future cost $ \widetilde{V}_{N_R-n}(\widetilde{S}_{f,n+1}) $. 
\end{itemize}

\begin{figure*}
	\begin{equation}\label{eqn:refin_proof}
	\overline{V}_{N_R-n+1} (\widetilde{S}_{f,n}) \overset{(a)} { \leq }\underbrace{\mathbb{E}_{\eta,\rho} \bigg\{\sum_{s}\! g_{f,n,s}(\widehat{\Omega}_{f,n}) + \sum\limits_{\widetilde{S}_{f,n}}{\widehat{V}_{N_R-n}(\widetilde{S}_{f,n+1})\!\Pr(\widetilde{S}_{f,n+1}|{S}_{f,n},\widehat{\Omega}_{f,n})}\bigg\} }_{\mbox{Denoted as }\check{V}_{N_R-n+1}(\widetilde{S}_{f,n})}\! \overset{(b)} {\leq}  \widehat{V}_{N_R-n+1}(\widetilde{S}_{f,n}) 
	\end{equation}
	\hrulefill
\end{figure*}

In order to prove $ \overline{V}_{N_R-n+1} (\widetilde{S}_{f,n}) \leq \widehat{V}_{N_R-n+1} (\widetilde{S}_{f,n})$, we first define the scheduling policy $\widehat{\Omega}_{f,n}$ as follows.
\begin{itemize}
	\item When the first requesting user falls in the coverage area of the $ c $-th cache node ($\mathbf{l}_{f,n} \in \mathcal{C}_{c}$, $ \forall c $), the scheduling policy $\widehat{\Omega}_{f,n}=\{(\widehat{P}_{f,n,s}^c,\widehat{N}_{f,n,s}^c)|\forall s\}$ minimizes the transmission cost for the $ s $-th segment ($ \forall s $) as if all other cache nodes have already decoded this segment. Hence, 
	\begin{align*}
	\!(\widehat{P}_{f,n,s}^c,\widehat{N}_{f,n,s}^c)=&\! \arg \! \min\limits_{P_{f,n,s} \atop N_{f,n,s}}\!\!\! \bigg\{ \! g_{f,n,s}\bigg(\!\widetilde{S}_{f,n},P_{f,n,s},N_{f,n,s}\!\bigg) \\
	 &\!+\!\!\! \sum \! \Pr(\widetilde{S}_{f,n+1}|\widetilde{S}_{f,n})\widehat{V}_{N_R-n}(\widetilde{S}_{f,n+1}) \! \bigg\}\\
	s.t.& \quad R_{f,n,s} = R_f^I \mbox{ and } \widehat{P}_{f,n,s}^c\leq P_B, \nonumber
	\end{align*}
	\item When the first requesting user falls outside the coverage area of any cache node ($\mathbf{l}_{f,1} \in \mathcal{C}_{0}$), the scheduling policy $\widehat{\Omega}_{f,n}=\{(P_{f,n,s}^0,N_{f,n,s}^0)|\forall s\}$, where 
	\begin{align*}
	&P_{f,n,s}^0= P_{f,n,s}^*+\sum_{ \forall c } (P_{f,n,s}^c-P_{f,n,s}^*),\\ &N_{f,n,s}^0=N_{f,n,s}^*+\sum_{ \forall c  } (N_{f,n,s}^c-N_{f,n,s}^*), 
	\end{align*}
	$ P_{f,n,s}^* $ and $ N_{f,n,s}^* $ are the optimal scheduling by assuming that all the cache nodes have decoded the $ s $-th segment (the expressions of them are provided in Lemma \ref{Lemma:Segment}). $ P_{f,n,s}^c $ and $ N_{f,n,s}^c $ are the optimal scheduling by assuming that all the cache nodes except the $ c $-th one have decoded the $ s $-th segment (the expressions of them are provided in Lemma \ref{Lemma:default_one_Segment}).
	
\end{itemize}

Then, we get the inequalities \eqref{eqn:refin_proof}, where $ \widetilde{S}_{f,n+1} $ is the next cache state given the current cache state $ {S}_{f,n} $ and scheduling policy $ \widehat{\Omega}_{f,n} $. The inequality (a) is because that $\overline{V}_{N_R-n+1} (\widetilde{S}_{f,n})$ uses the optimal scheduling policy and $\check{V}_{N_R-n+1}(\widetilde{S}_{f,n})$ uses heuristic scheduling policy. Both  $\check{V}_{N_R-n+1}(\widetilde{S}_{f,n})$ and $\widehat{V}_{N_R-n+1}(\widetilde{S}_{f,n})$ spent the same cost on the first file transmission. However, the evaluation of future cost in  $\widehat{V}_{N_R-n+1}(\widetilde{S}_{f,n})$ is more conservative (larger) than that of $\check{V}_{N_R-n+1}(\widetilde{S}_{f,n})$. The inequality (b) can be obtained.

\bibliographystyle{IEEEtran}
\bibliography{TCOM_Cache}

% Generated by IEEEtran.bst, version: 1.13 (2008/09/30)
\begin{thebibliography}{10}
\providecommand{\url}[1]{#1}
\csname url@samestyle\endcsname
\providecommand{\newblock}{\relax}
\providecommand{\bibinfo}[2]{#2}
\providecommand{\BIBentrySTDinterwordspacing}{\spaceskip=0pt\relax}
\providecommand{\BIBentryALTinterwordstretchfactor}{4}
\providecommand{\BIBentryALTinterwordspacing}{\spaceskip=\fontdimen2\font plus
\BIBentryALTinterwordstretchfactor\fontdimen3\font minus
  \fontdimen4\font\relax}
\providecommand{\BIBforeignlanguage}[2]{{%
\expandafter\ifx\csname l@#1\endcsname\relax
\typeout{** WARNING: IEEEtran.bst: No hyphenation pattern has been}%
\typeout{** loaded for the language `#1'. Using the pattern for}%
\typeout{** the default language instead.}%
\else
\language=\csname l@#1\endcsname
\fi
#2}}
\providecommand{\BIBdecl}{\relax}
\BIBdecl

\bibitem{WANG2017}
B.~Lv, L.~Huang, and R.~Wang, ``Cellular offloading via downlink cache
  placement,'' in \emph{2018 IEEE International Conference on Communications
  (ICC)}, 2018.

\bibitem{SalmanAvestimehr2014}
N.~Naderializadeh, D.~T.~H. Kao, and A.~S. Avestimehr, ``How to utilize caching
  to improve spectral efficiency in device-to-device wireless networks,'' in
  \emph{2014 52nd Annual Allerton Conference on Communication, Control, and
  Computing (Allerton)}, Sept 2014, pp. 415--422.

\bibitem{Vincent2013}
A.~Liu and V.~K.~N. Lau, ``Mixed-timescale precoding and cache control in
  cached {MIMO} interference network,'' \emph{IEEE Transactions on Signal
  Processing}, vol.~61, no.~24, pp. 6320--6332, Dec. 2013.

\bibitem{Leconte2016}
M.~Leconte, G.~Paschos, L.~Gkatzikis, M.~Draief, S.~Vassilaras, and
  S.~Chouvardas, ``Placing dynamic content in caches with small population,''
  in \emph{IEEE INFOCOM 2016 - The 35th Annual IEEE International Conference on
  Computer Communications}, April 2016, pp. 1--9.

\bibitem{Caire2012}
N.~Golrezaei, K.~Shanmugam, A.~G. Dimakis, A.~F. Molisch, and G.~Caire,
  ``Femtocaching: Wireless video content delivery through distributed caching
  helpers,'' in \emph{2012 Proceedings IEEE INFOCOM}, March 2012, pp.
  1107--1115.

\bibitem{Molisch2013}
N.~Golrezaei, A.~F. Molisch, A.~G. Dimakis, and G.~Caire, ``Femtocaching and
  device-to-device collaboration: A new architecture for wireless video
  distribution,'' \emph{IEEE Communications Magazine}, vol.~51, no.~4, pp.
  142--149, April 2013.

\bibitem{W.Choi2016}
S.~H. Chae and W.~Choi, ``Caching placement in stochastic wireless caching
  helper networks: Channel selection diversity via caching,'' \emph{IEEE
  Transactions on Wireless Communications}, vol.~15, no.~10, pp. 6626--6637,
  Oct 2016.

\bibitem{K.B.Huang2017}
J.~Wen, K.~Huang, S.~Yang, and V.~O.~K. Li, ``Cache-enabled heterogeneous
  cellular networks: Optimal tier-level content placement,'' \emph{IEEE
  Transactions on Wireless Communications}, vol.~16, no.~9, pp. 5939--5952,
  Sept 2017.

\bibitem{JunZhang2016}
R.~Wang, X.~Peng, J.~Zhang, and K.~B. Letaief, ``Mobility-aware caching for
  content-centric wireless networks: modeling and methodology,'' \emph{IEEE
  Communications Magazine}, vol.~54, no.~8, pp. 77--83, August 2016.

\bibitem{coded_cache_1}
M.~A. Maddah-Ali and U.~Niesen, ``Fundamental limits of caching,'' \emph{IEEE
  Transactions on Information Theory}, vol.~60, no.~5, pp. 2856--2867, May
  2014.

\bibitem{M.Tao2017}
X.~Xu and M.~Tao, ``Modeling, analysis, and optimization of coded caching in
  small-cell networks,'' \emph{IEEE Transactions on Communications}, vol.~65,
  no.~8, pp. 3415--3428, Aug 2017.

\bibitem{Y.Cui2016}
B.~Zhou, Y.~Cui, and M.~Tao, ``Stochastic content-centric multicast scheduling
  for cache-enabled heterogeneous cellular networks,'' \emph{IEEE Transactions
  on Wireless Communications}, vol.~15, no.~9, pp. 6284--6297, Sept. 2016.

\bibitem{Y.Cui2017}
------, ``Optimal dynamic multicast scheduling for cache-enabled
  content-centric wireless networks,'' \emph{IEEE Transactions on
  Communications}, vol.~65, no.~7, pp. 2956--2970, July 2017.

\bibitem{Ansari2016}
X.~Huang and N.~Ansari, ``Content caching and user scheduling in heterogeneous
  wireless networks,'' in \emph{2016 IEEE Global Communications Conference
  (GLOBECOM)}, Dec 2016, pp. 1--6.

\bibitem{cui2016gc}
Y.~Cui, F.~Lai, S.~Hanly, and P.~Whiting, ``Optimal caching and user
  association in cache-enabled heterogeneous wireless networks,'' in \emph{2016
  IEEE Global Communications Conference (GLOBECOM)}, Dec 2016, pp. 1--6.

\bibitem{Cui2017}
Y.~Cui and D.~Jiang, ``Analysis and optimization of caching and multicasting in
  large-scale cache-enabled heterogeneous wireless networks,'' \emph{IEEE
  Transactions on Wireless Communications}, vol.~16, no.~1, pp. 250--264, Jan.
  2017.

\bibitem{Moghadari2013}
M.~Moghadari, E.~Hossain, and L.~B. Le, ``Delay-optimal distributed scheduling
  in multi-user multi-relay cellular wireless networks,'' \emph{IEEE
  Transactions on Communications}, vol.~61, no.~4, pp. 1349--1360, April 2013.

\bibitem{CuiTIT}
Y.~{Cui}, V.~K.~N. {Lau}, R.~{Wang}, H.~{Huang}, and S.~{Zhang}, ``A survey on
  delay-aware resource control for wireless systems—large deviation theory,
  stochastic lyapunov drift, and distributed stochastic learning,'' \emph{IEEE
  Transactions on Information Theory}, vol.~58, no.~3, pp. 1677--1701, March
  2012.

\bibitem{cui2010}
Y.~Cui and V.~K.~N. Lau, ``Distributive stochastic learning for delay-optimal
  {OFDMA} power and subband allocation,'' \emph{IEEE Transactions on Signal
  Processing}, vol.~58, no.~9, pp. 4848--4858, Sept 2010.

\bibitem{Wang2013}
R.~Wang and V.~K.~N. Lau, ``Delay-aware two-hop cooperative relay
  communications via approximate {MDP} and stochastic learning,'' \emph{IEEE
  Transactions on Information Theory}, vol.~59, no.~11, pp. 7645--7670, Nov
  2013.

\bibitem{Shewhart2011Approximate}
W.~A. Shewhart and S.~S. Wilks, \emph{Approximate Dynamic Programming: Solving
  the Curses of Dimensionality, Second Edition}, 2011.

\bibitem{FHMDP}
Q.-S. Jia, ``A potential-based method for finite-stage markov decision
  process,'' in \emph{2008 American Control Conference}, June 2008, pp.
  5029--5034.

\bibitem{May2016}
K.~Poularakis, G.~Iosifidis, I.~Pefkianakis, L.~Tassiulas, and M.~May, ``Mobile
  data offloading through caching in residential 802.11 wireless networks,''
  \emph{IEEE Transactions on Network and Service Management}, vol.~13, no.~1,
  pp. 71--84, March 2016.

\bibitem{Lv2018}
B.~{Lv}, R.~{Wang}, Y.~{Cui}, and H.~{Tan}, ``Joint optimization of file
  placement and delivery in cache-assisted wireless networks,'' in \emph{2018
  IEEE Global Communications Conference (GLOBECOM)}, Dec 2018, pp. 1--7.

\bibitem{Molisch2016}
M.~Ji, G.~Caire, and A.~F. Molisch, ``Wireless device-to-device caching
  networks: Basic principles and system performance,'' \emph{IEEE Journal on
  Selected Areas in Communications}, vol.~34, no.~1, pp. 176--189, Jan. 2016.

\bibitem{Web_cache}
L.~Breslau, P.~Cao, L.~Fan, G.~Phillips, and S.~Shenker, ``Web caching and
  zipf-like distributions: evidence and implications,'' in \emph{INFOCOM '99.
  Eighteenth Annual Joint Conference of the IEEE Computer and Communications
  Societies. Proceedings. IEEE}, vol.~1, Mar 1999, pp. 126--134 vol.1.

\bibitem{paulraj2003introduction}
A.~i~{Paulraj}, R.~{Nabar}, and D.~{Gore}, \emph{Introduction to Space-Time
  Wireless Communications}, 2003.

\bibitem{Heath2011}
R.~W.~H. Jr, T.~Wu, Y.~H. Kwon, and A.~C.~K. Soong, ``Multiuser {MIMO} in
  distributed antenna systems with out-of-cell interference,'' \emph{IEEE
  Transactions on Signal Processing}, vol.~59, no.~10, pp. 4885--4899, Oct.
  2011.

\bibitem{Yang2016Analysis}
C.~Yang, Y.~Yao, Z.~Chen, and B.~Xia, ``Analysis on cache-enabled wireless
  heterogeneous networks,'' \emph{IEEE Transactions on Wireless
  Communications}, vol.~15, no.~1, pp. 131--145, 2016.

\bibitem{peter2007}
P.~W.~C. Chan, E.~S. Lo, V.~K.~N. Lau, R.~S. Cheng, K.~B. Letaief, R.~D. Murch,
  and W.~H. Mow, ``Performance comparison of downlink multiuser {MIMO-OFDMA}
  and {MIMO-MC-CDMA} with transmit side information - multi-cell analysis,''
  \emph{IEEE Transactions on Wireless Communications}, vol.~6, no.~6, pp.
  2193--2203, June 2007.

\bibitem{Bertsekas2000Dynamic}
D.~P. Bertsekas, \emph{Dynamic Programming and Optimal Control}.\hskip 1em plus
  0.5em minus 0.4em\relax Athena Scientific, 2000.

\bibitem{W}
U.~Tamm, ``Some refelections about the {Lambert W} function as inverse of
  x*log(x),'' in \emph{2014 Information Theory and Applications Workshop
  (ITA)}, Feb. 2014, pp. 1--4.

\end{thebibliography}

\end{document}